\documentclass[runningheads,envcountsame]{llncs}

\usepackage[T1]{fontenc}
\usepackage{graphicx}

\usepackage{quiver}
\usepackage{hyperref}
\usepackage{amsfonts,amsmath,amssymb}
\usepackage{mathtools}
\usepackage{enumitem}

\newcommand{\dom}{\text{dom}}
\newcommand{\cod}{\text{cod}}
\newcommand{\slice}[2]{#1 / #2}

\newcommand{\Tau}{\ensuremath{\mathcal{T}}}
\newcommand{\TT}{\Tau}
\newcommand{\CC}{\ensuremath{\mathcal{C}}}

\newcommand{\xx}{\overline{x}}

\newcommand{\hset}{\mathrm{hSet}}

\newcommand{\itSet}{\mathcal{V}}
\newcommand{\IRSet}{\mathcal{U}}

\newcommand{\Asm}{\ensuremath{\mathsf{Asm}}}

\newcommand{\op}{\ensuremath{\mathsf{op}}}
\newcommand{\iso}{\cong}
\newcommand{\opC}[1]{#1^{\op}}
\newcommand{\catOfEl}[1]{\int #1}

\newcommand{\Ty}{\ensuremath{\mathsf{Ty}}}
\newcommand{\Tm}{\ensuremath{\mathsf{Tm}}}

\newcommand{\cat}[1]{\ensuremath{\mathbf{#1}}}

\newcommand{\Fam}{\ensuremath{\cat{Fam}}}

\newcommand{\app}{\ensuremath{\mathsf{app}}}
\newcommand{\id}[1]{\ensuremath{\mathsf{id}_{#1}}}
\newcommand{\isodem}[1]{\ensuremath{i_{\mathsf{dem},{#1}}}}

\newcommand{\tmincompcat}[2]{\ensuremath{\mathsf{tm}_{#1}(#2)}}
\newcommand{\coerce}[2]{\ensuremath{#1 \uparrow_{#2}}}

\newcommand{\emphasize}[1]{\textbf{#1}}

\newcommand{\emptyctx}{[]}

\newcommand{\el}{\ensuremath{\mathsf{el}}}
\newcommand{\U}{\ensuremath{\mathsf{U}}}
\newcommand{\Uh}{\widehat{\U}}
\newcommand{\LiftAsm}[1]{#1_{\ensuremath{\mathsf{Asm}}}}
\newcommand{\UAsm}{\LiftAsm{\U}}
\newcommand{\UPsh}{\U_{\ensuremath{\mathsf{Psh}}}}
\newcommand{\PER}{\ensuremath{\mathsf{PER}}}

\newcommand{\constructedcwf}[2]{\mathsf{UnivMod}({#2})}

\newcommand{\cwffont}[1]{\ensuremath{\overline{#1}}}

\newcommand{\pair}[2]{\mathsf{pair}_{#1, #2}}

\newcommand{\PiCompCat}[2]{\ensuremath{\Pi(#1,#2)}}
\newcommand{\SigmaCompCat}[2]{\ensuremath{\Sigma(#1,#2)}}
\newcommand{\pairCompCat}[2]{\ensuremath{\mathsf{pair}_{{#1},{#2}}}}
\newcommand{\projCompCat}[2]{\ensuremath{\mathsf{proj}_{{#1},{#2}}}}
\newcommand{\appCompCat}[2]{\ensuremath{\app_{{{#1},{#2}}}}}
\newcommand{\lambdaCompCat}[2]{\ensuremath{\lambda_{{#1},{#2}}}}
\newcommand{\UnitCompCat}[1]{\ensuremath{\mathbf{1}_{#1}}}
\newcommand{\ttCompCat}[1]{\ensuremath{\mathsf{tt}_{#1}}}

\newcommand{\PiCwf}[2]{\ensuremath{\Pi(#1,#2)}}
\newcommand{\SigmaCwf}[2]{\ensuremath{\Sigma(#1,#2)}}
\newcommand{\pairCwf}[2]{\ensuremath{\mathsf{pair}_{{#1},{#2}}}}
\newcommand{\projoneCwf}[2]{\ensuremath{\mathsf{proj}^1_{{#1},{#2}}}}
\newcommand{\projtwoCwf}[2]{\ensuremath{\mathsf{proj}^2_{{#1},{#2}}}}
\newcommand{\appCwf}[2]{\ensuremath{\app_{{{#1},{#2}}}}}
\newcommand{\lambdaCwf}[2]{\ensuremath{\lambda_{{#1},{#2}}}}
\newcommand{\UnitCwf}[1]{\ensuremath{\mathbf{1}_{#1}}}
\newcommand{\ttCwf}[1]{\ensuremath{\mathsf{tt}_{#1}}}

\newcommand{\universeOperation}[1]{\ensuremath{{#1}^\U}}
\newcommand{\SigmaUniverse}[2]{\ensuremath{\universeOperation{\Sigma}}({#1},{#2})}
\newcommand{\PiUniverse}[2]{\ensuremath{\universeOperation{\Pi}}({{#1},{#2}})}
\newcommand{\UnitUniverse}[1]{\ensuremath{\universeOperation{\mathbf{1}}}_{#1}}

\newcommand{\Univiso}[2]{\ensuremath{i^{\el}_{{#1},{#2}}}}
\newcommand{\Uiso}[1]{\ensuremath{i^\U_{#1}}}

\newcommand{\cwfextention}[2]{\ensuremath{{#1}._{\mathsf{CwF}}{#2}}}
\newcommand{\cwfmorphism}{\ensuremath{\to_{\mathsf{CwF}}}}
\newcommand{\cwfsubst}[2]{\ensuremath{{#1[#2]_\mathsf{CwF}}}}

\usepackage[capitalize,nameinlink]{cleveref}  

\AddToHook{cmd/appendix/before}{\crefalias{section}{appendix}}

\spnewtheorem{notation}[theorem]{Notation}{\bfseries}{\rmfamily}
\spnewtheorem{construction}[theorem]{Construction}{\bfseries}{\rmfamily}
\spnewtheorem{intuition}[theorem]{Intuition}{\bfseries}{\rmfamily}
\spnewtheorem*{proofsketch}{Proof sketch}{\itshape}{\rmfamily}
\spnewtheorem*{mydef}{Definition}{\bfseries}{\itshape}

\crefname{notation}{Notation}{Notations}
\crefname{construction}{Construction}{Constructions}
\crefname{intuition}{Intuition}{Intuitions}

\newcommand{\coqdocbaseurl}{https://anonymous.4open.science/r/fossacs-strict-hott}
\newlength{\coqsymbolheight}
\newcommand{\coqsymbol}{\begingroup\normalfont\settoheight{\coqsymbolheight}{\{}\includegraphics[height=\coqsymbolheight]{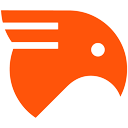}\endgroup}
\newcommand{\coqlink}{\begingroup\catcode`\#=12 \catcode`\%=12 \coqlinkaux}
\newcommand{\coqlinkaux}[1]{\endgroup\href{#1}{\coqsymbol}}

\newcommand{\allowhashin}[1]{\AddToHook{env/#1/begin}{\catcode`\#=12 }}
\allowhashin{theorem}
\allowhashin{lemma}
\allowhashin{proposition}
\allowhashin{corollary}
\allowhashin{definition}
\allowhashin{example}
\allowhashin{remark}
\allowhashin{notation}
\allowhashin{construction}

\begin{document}

\title{A General Construction of Strict Models in HoTT} 

\author{Niyousha Najmaei\inst{1} \and 
Niels van der Weide\inst{2}}
\institute{LIX, École Polytechnique, Inria, Palaiseau, France
\\
\email{najmaei@lix.polytechnique.fr}
\and
Eötvös Loránd University, Budapest, Hungary \\
\email{vanderweide@inf.elte.hu}
}

\authorrunning{N. Najmaei and N. Van der Weide} 

\maketitle

\begin{abstract}

Homotopy type theory (HoTT) provides a natural setting for studying the metatheory of type theory,
as it comes with additional extensionality principles compared to ordinary Martin-Löf type theory.
In particular, it is especially suitable for developing semantics of type theory,
where one distinguishes two kinds of models:
strict models,
which give us sound interpretations,
and pseudomodels.
Pseudomodels are useful as they occur prevalently
and can be turned into strict models by constructions called strictifications.
However,
the commonly used strictifications do not translate well to HoTT,
as they require assumptions that are generally not satisfied by concrete examples of models.

We show how to carry out the universe model construction, due to Voevodsky, in HoTT.
Under this construction, every pseudomodel together with a universe type closed under enough type formers
gives rise to a strict model.
We also show how to interpret type formers in the resulting model
assuming that the universe is closed under them.
Finally, we demonstrate that this construction generalises existing instances of strict set models in the literature,
and we apply it to more examples, yielding presheaf models and assembly models of type theory in HoTT.
Our results are formalised in Rocq using the UniMath library.
\end{abstract}

\section{Introduction}

Categorical semantics provides a 
framework for studying desirable properties of type theories, such as consistency and normalisation
\cite{DBLP:journals/lmcs/AltenkirchK17,coquand:2019,DBLP:conf/lics/Gratzer22,Kapulkin2021}.
A prominent example of a type theory whose categorical models are well-studied,
is Martin-L\"of type theory (MLTT).
MLTT serves as a foundation for constructive mathematics \cite{mltt}, and 
of many proof assistants, such as Rocq \cite{team:2024}, Lean \cite{demoura:2015} and Agda \cite{agda}.

One can study the semantics of type theory within type-theoretical foundations, 
which has numerous advantages.
One of the immediate advantages of such an approach is that it enables formalisation and direct
mechanisation of proofs in a proof assistant.
Another advantage is that type theoretical foundations provide better tools for 
reasoning about models by, for example, 
providing stronger and more suitable notions of invariance. 

Homotopy Type Theory (HoTT) is, for instance, a particularly suitable ambient theory for the study of 
the semantics of type theory.
Compared to ordinary MLTT,
HoTT has more extensionality principles,
namely the univalence axiom,
which states that types are identified if they are isomorphic.
The univalence axiom is beneficial in the formalisation of mathematics,
as it allows us to directly conclude that all properties of types are stable under isomorphism
and that we can transport proofs and structures across isomorphisms.
Without univalence, such properties are laborious to establish,
as they must be proved for each instance of interest.
In addition,
univalence can be instantiated to a wide variety of mathematical structures,
making its advantages broadly available~\cite{ahrens:2025}.

In this paper, we discuss a general technique for constructing models of type theory in HoTT.
We give an analysis on how this approach generalises various existing set models of 
type theory in the literature~\cite{altenkirch:2016,Gratzer_Gylterud_Mörtberg_Stenholm_2024} 
and also provide new examples, namely presheaf models and assembly models of type theory
in HoTT.

\subsubsection*{Semantics of Type Theories}

Various categorical structures have been studied for the semantics of type theory. 
We distinguish between two classes of such structures: strict models vs pseudomodels. 
The distinction lies in how substitution is treated: strict models interpret substitution strictly functorially (i.e., up to equality),
whereas pseudomodels interpret it only pseudofunctorially (i.e., up to isomorphism).
Although pseudomodels arise naturally in mathematics,
only strict models can interpret type theory, as substitution is usually strictly functorial there.
Examples of strict models include contextual categories~\cite{cartmell:1978}, categories with families (CwFs)~\cite{dybjer96} 
and natural models~\cite{awodey:2018a}, and examples of pseudomodels include 
display map categories~\cite{taylor87} and comprehension categories~\cite{jacobs93,jacobs99}.
In the remainder of this paper,
we use CwFs as our notion of strict model
and comprehension categories as our notion of pseudomodel.

Even though we cannot interpret type theory soundly in pseudomodels,
general methods, called strictification constructions, turn them into equivalent strict models. 
Various examples of such constructions have been used to interpret 
type theory \cite{hofmann:1994,gambino:2023,awodey:2009,10.1145/2754931},
most notably the ones based on 
the left and right adjoint splitting
constructions due to Giraud \cite{Giraud1965} and Bénabou \cite{streicher18}.
We give a more detailed discussion on strictification constructions in \cref{sec:setting}.

\subsubsection*{Semantics of Type Theories in HoTT}

When studying the semantics of type theories, defining the syntax and its models
presents subtle challenges in HoTT.
One approach to defining the syntax within type theory itself
was described by Altenkirch and Kaposi~\cite{altenkirch:2016}, who define the syntax of type theory using
\emph{quotient inductive-inductive types}~\cite{10.1145/3290315} as the
initial category with families.
Although successful, this approach admits only a restrictive class of models in HoTT.
For instance,
both the set and presheaf models of type theory,
only give rise to a CwF if one assumes \emph{uniqueness of identity proofs}---an assumption stating that all proofs of identity are equal, which is incompatible with the univalence axiom.
In response, Altenkirch and Kaposi~\cite{altenkirch:2016} use an inductive-recursive universe to get 
a set model.
Later Gratzer, Gylterud, M\"ortberg, and Stenholm~\cite{Gratzer_Gylterud_Mörtberg_Stenholm_2024} 
construct an alternative model using iterative sets.
However, neither gives a general technique for constructing models of type theory in HoTT.

We present a general technique for constructing strict models of type theory in HoTT.
We carry out the universe model construction, due to Voevodsky \cite{voevodsky2015csystemdefineduniversecategory}, in HoTT and 
show that it
serves as a general construction for strict models.
This construction generalises
the models by Altenkirch and Kaposi, and by Gratzer, Gylterud, M\"ortberg, and Stenholm.
We also instantiate it to more examples, yielding new strict presheaf models
and assembly models of type theory in HoTT.
Realizability models, such as assembly models, provide a versatile technique for 
constructing models of system F and the calculus of constructions~~\cite{luo:1990,reus:1999a,streicher:1991},
while presheaf models are used, among others, 
to prove normalisation by evaluation~\cite{DBLP:journals/lmcs/AltenkirchK17,coquand:2019,DBLP:conf/lics/Gratzer22}.
Our approach is appealing for these examples because
universes in presheaves and assemblies are well understood;
the challenge we tackle is showing that 
the construction can be carried out in HoTT for these examples.

\subsubsection*{Constructing Strict Models in HoTT}
In this paper, we carry out a universe model construction,
based on Voevodsky's \cite{voevodsky2015csystemdefineduniversecategory},
in HoTT.
We start with a comprehension category with a universe closed under unit and $\Sigma$-types
and construct a CwF by restricting to the universe.
Recall that comprehension categories are pseudomodels, whereas CwFs are strict models.

A universe in a comprehension category gives a type $\U_\Gamma$ in each context $\Gamma$ and a 
function $\el$ mapping each term $a$ of type $\U_\Gamma$ to a type $\el(a)$ in context $\Gamma.$
The construction of a CwF from the universe works as follows:
\begin{itemize}
  \item we take the CwF-contexts to be terms of the universe in the empty context;
  \item we take the CwF-types in a context $\Gamma$ to be terms of $\U_\Gamma$;
  \item for a CwF-type $A$, we take the CwF-terms of $A$ to be terms of $\el(A)$.
\end{itemize}

This yields a CwF, as the collections of terms of a certain type
in a comprehension category make a set, and substitution of terms is strictly functorial there.
Our construction is similar to Voevodsky's construction of C-systems from categories with a universe \cite{voevodsky2015csystemdefineduniversecategory}, 
with differences which we discuss in \Cref{sec:relwork}.
We also show that
the resulting CwF supports unit and $\Sigma$-types, 
supports $\Pi$-types if the starting universe is closed under $\Pi$-types, and is democratic.

\subsubsection*{Contributions}
The contributions of this paper can be summarised as follows. 
\begin{enumerate}
    \item We carry out a universe model construction, inspired by Voevodsky's \cite{voevodsky2015csystemdefineduniversecategory}, 
    in HoTT foundations: a general construction for strict models in HoTT that can be used to 
    develop semantics of type theory within HoTT (\cref{construction,thm:construction});
    \item We show that $\Sigma$-, unit, and $\Pi$-types are preserved under this construction (\cref{thm:universe_sigma,thm:universe_unit,thm:universe_pi}) and the resulting CwFs are democratic (\cref{thm:democracy});
    \item We show that this construction generalises two existing examples of set models of type theory in HoTT 
    (\cref{exa:iterative,exa:ir-univ});
    \item We present new examples by instantiating the construction to obtain presheaf and assembly models of type theory in HoTT (\cref{exa:assembly,exa:psh}).
\end{enumerate}

The main results of the paper are formalised in Rocq in the UniMath library \cite{unimath} and
are available \href{\coqdocbaseurl}{here}. We also provide links to specific sections of the 
formalisation throughout the paper using the Rocq logo \coqsymbol. 
Our formalisation in \Cref{sec:example_ass} builds on the work by Speight and Van der Weide~\cite{speight:2026}.

\subsubsection*{Synopsis}
In \cref{sec:setting}, we give a more precise description of the setting of the problem and discuss various strictification constructions.
In \cref{sec:background}, we review the background on CwFs and comprehension categories. 
In \cref{sec:universe_comp_cat}, we give the definition of universes of comprehension categories and
 their closure properties under type formers. 
In \cref{sec:construction}, we present the construction from a comprehension category with a universe to
a category with families. 
We also show that the construction preserves certain type formers, namely $\Sigma$-, $\Pi$- and unit type and that the resulting CwF is democratic.
In \cref{sec:examples}, we instantiate the construction to set models, presheaf models and assembly models.
In \cref{sec:relwork}, we discuss related work and compare our construction to various existing ones. 

\section{Setting}
\label{sec:setting}
In this section, we give a more detailed explanation of the problem addressed in this paper,
which is the construction of strict models in HoTT.
We review the commonly used strictification constructions,
namely the ones based on the left and right adjoint splitting constructions~\cite{Giraud1965,streicher18} 
and argue that they are not suitable for HoTT.
Recall that there are two kinds of models:
strict models, in which substitution is modelled strictly functorially,
and pseudomodels, which model substitution pseudofunctorially (functorial up to isomorphism).

Before we look at strictification constructions,
we explicate our notions of strict models and pseudomodels.
We use categories with families (CwF) as our notion of strict model.
Every CwF comes with an interpretation of type theory,
since the syntax of type theory can be defined as the initial CwF~\cite{altenkirch:2016}.
Our notion of pseudomodel should be general enough to admit interesting examples in HoTT,
such as set and presheaf models.
As our notion of pseudomodel, we take comprehension categories
since they are a more general notion
compared to other commonly used pseudomodels of type theory \cite{DBLP:conf/aplas/AhrensLN24} encompassing more 
examples of constructive models of type theory \cite{gambino:2023,10.1145/3776725}.
In addition,
every univalent category with finite limits gives rise to a comprehension category~\cite{vanderweide2025internallanguagesunivalentcategories},
so this notion encapsulates the set and presheaf models.

The notion of homotopy level becomes subtly important when discussing strictification constructions in HoTT.
Recall that a type $X$ is called a set if for all $x, y : X$ and $p, q : x = y$ we have $p = q$.
We denote the type of sets by $\hset$.
Not every type is a set,
since,
for instance,
$\hset$ itself is not a set.
A crucial distinction between our notions of strict models and pseudomodels lies in requirements on homotopy levels.
In CwFs,
the types in a certain context are required to be a set
and substitution is strict,
making them a suitable class of strict models.
However,
we relax these requirements for pseudomodels,
for which we use comprehension categories.
In general,
the types in a comprehension category do not form a set.
It is important to relax this requirement for pseudomodels,
because,
for instance,
in set models of type theory,
where types are interpreted as members of $\hset$,
the types do not form a set themselves.

To build a strict model from a pseudomodel in HoTT,
we construct a CwF from a comprehension category.
This requires altering the comprehension category in two ways.
First,
we must replace the collection of types by a set.
Second,
we must change the substitution operation on types
so that it becomes strictly functorial.
Note that in set-theoretic foundations, such constructions only need to alter the substitution operation.

To understand how comprehension categories can be strictified, one needs to know how type theory is interpreted there. 
In comprehension categories, the category of types is fibred over the category of contexts.
Although fibrations can in general only admit a pseudofunctorial interpretation of substitution, there is a class of fibrations,
where strictly functorial substitution can be interpreted: \emph{split} fibrations.
Strictification constructions for fibration-based models are based on constructions that
build split fibrations equivalent to the starting non-split ones. 

There are two main such splitting constructions, the left and right adjoint splitting constructions 
due to Giraud \cite{Giraud1965} and Bénabou \cite{streicher18}, where 
the splittings are given as left and right adjoints to the inclusion functor
$i : \mathsf{SplitFib}(\CC) \hookrightarrow \mathsf{Fib}(\CC)$.
The right adjoint splitting was used by Hofmann to interpret Martin-L\"of type theory in locally Cartesian closed categories~\cite{hofmann:1994},
and one can also use the right adjoint splitting to interpret homotopy type theory using algebraic weak factorisation systems~\cite{gambino:2023}.
Lumsdaine and Warren showed how to use the left adjoint splitting to interpret type theory using model categories~\cite{awodey:2009,10.1145/2754931}.

For a complete explanation of the left and right adjoint splittings, we refer the reader to \cite{streicher18} and \cite{CURIEN201499,hofmann:1994} respectively. 
We also do not explicate how these splitting constructions lift from
fibrations to models of type theory, e.g. comprehension categories. 
For this, we refer the reader to \cite{CURIEN201499,10.1145/2754931}.
We do, however, discuss whether they are a suitable for HoTT.

Let us check whether the left and right adjoint splitting constructions are suitable for a strictification construction in HoTT.
Since, in fibration-based models, types are interpreted as objects in the fibres,
this amounts to verifying that the fibres of the resulting split fibration form a set.

Let $p : \TT \to \CC$ be a (cloven) fibration and let $\Gamma : \CC$.
In the right adjoint splitting,
objects over $\Gamma$ consist of $A : \TT_\Gamma$ together with a function
mapping a morphism $f : \Gamma' \to \Gamma$ to a cartesian lifting of $f$ into $A$ such that the function maps the identity morphism to the identity lift.
We denote this type by $\TT_R$.
In the left adjoint splitting, objects over $\Gamma$ are triples  $(U_A : \CC, E_A : \TT_U , [A] : \Gamma \to U_A )$,
and we denote this type of objects by $\TT_L$.

\begin{proposition}
If both $\TT$ and $\CC$ have a set of objects,
then $\TT_R$ and $\TT_L$ are sets.
\end{proposition}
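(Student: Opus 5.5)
The plan is to reduce both claims to closure properties of sets in HoTT: $\Sigma$-types of sets over sets are sets, $\Pi$-types into sets are sets, and propositions (in particular identity types of sets) are sets. Throughout I use that in a category (in the UniMath sense) every hom-type is a set. For the displayed category $\TT$ over $\CC$, this means every displayed hom-type $\TT_f(A,B)$ over $f : \Gamma' \to \Gamma$ is a set. By hypothesis the types $\mathrm{ob}(\CC)$ and $\mathrm{ob}(\TT)$ are sets, so each fibre $\TT_\Gamma$ is a set as well.

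\textbf{Left adjoint splitting.} $\TT_L$ over $\Gamma$ is the iterated $\Sigma$-type
\[ \textstyle\sum_{U_A : \mathrm{ob}(\CC)} \sum_{E_A : \TT_{U_A}} \mathrm{hom}_\CC(\Gamma, U_A). \]
Each component is a set:
\begin{itemize}
\item $\mathrm{ob}(\CC)$ by assumption;
\item each fibre $\TT_{U_A}$ by the remark above;
\item $\mathrm{hom}_\CC(\Gamma,U_A)$ because $\CC$ is a category.
\end{itemize}
Closure of sets under $\Sigma$ then shows that $\TT_L$ is a set.

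\textbf{Right adjoint splitting.} $\TT_R$ over $\Gamma$ is
\[ \textstyle\sum_{A : \TT_\Gamma} \sum_{\phi} \bigl(\phi(\id{\Gamma}) = \text{identity lift}\bigr), \]
where $\phi$ ranges over $\prod_{\Gamma' : \mathrm{ob}(\CC)} \prod_{f : \Gamma' \to \Gamma} \mathrm{Lift}(f,A)$. I will unfold a cartesian lifting of $f$ into $A$ as
\[ \textstyle\sum_{B : \TT_{\Gamma'}} \sum_{\bar f : \TT_f(B,A)} \mathsf{isCartesian}(\bar f). \]
Here $B$ ranges over a set and $\bar f$ over a displayed hom-set. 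Being cartesian is a proposition, because it says that certain maps between displayed hom-sets are bijections, i.e.\ it is a $\Pi$ of $\mathsf{isContr}$ statements. So $\mathrm{Lift}(f,A)$ is a set, and hence the $\Pi$-type of such $\phi$ is a set. The final component is an identity type in the set $\mathrm{Lift}(\id{\Gamma},A)$, so it is a proposition and in particular a set. Closing under $\Sigma$ gives that $\TT_R$ is a set.

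\textbf{Main subtlety.} The one step I expect to need care is the formulation of the condition ``maps the identity to the identity lift''. If it is stated as an identity in $\mathrm{Lift}(\id{\Gamma},A)$, the argument above applies directly. If it is phrased using a transport along an equality of objects, I would check that this transport lives in the set $\TT_\Gamma$, so the condition is still an identity type in a set and therefore a proposition. Every other step is a direct application of the closure lemmas above.
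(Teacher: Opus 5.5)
Your proposal is correct and follows essentially the argument the paper intends; the paper states the proposition without a written proof and only remarks that identities in $\TT_R$ and $\TT_L$ involve identities of objects. Both $\TT_L$ and $\TT_R$ are built from objects of $\CC$, fibres of $\TT$, hom-sets and propositions using $\Sigma$-, $\Pi$- and identity types, so closure of sets under these type formers suffices, and the step you pass over quickly (each fibre $\TT_\Gamma$ is a set) holds because $\TT_\Gamma$ is the homotopy fibre over $\Gamma$ of the object map of $p$, which is a map between two sets.
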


Note that for $\TT_R$ and $\TT_L$ to be sets, we do need to require the objects of $\TT$ and $\CC$ to be sets. 
This is because identities of both $\TT_R$ and $\TT_L$ consist,
among others,
of identities of objects in $\TT$.

We can thus see that neither of these constructions satisfies our desired properties,
since if we start with a general pseudomodel,
the types in the resulting model are not guaranteed to form a set.
In particular, there might exist non-trivial proofs of identity in $\CC$ and $\TT$.

There is, however, an interesting special case of the left adjoint splitting, which does give us a strict model.
This is the universe model construction, due to Voevodsky~\cite{voevodsky2015csystemdefineduniversecategory}, 
where one first fixes a $U: \CC$ and $E: \TT_U$, and the fiber over 
$\Gamma$ becomes maps $\Gamma \to U$.
These fibers form a set,
since the morphisms in $\CC$ always form a set by definition.
In the remainder of this paper,
we study the universe model construction in the context of homotopy type theory.

\begin{remark}
We see the base category as the category of contexts and the fibers as types in that context.
The fixed data $U$ and $E$ can be thought of as 
having a universe $U$ and an $\el$ map in the sense of a Tarski style 
universe (see \cref{def:comp_cat_universe}). The splitting construction 
can be thought of as having the types in the resulting 
model be the terms of type $U$ in the starting model (see \cref{construction}).
\end{remark}

\begin{remark}
Before ending this section, it is worth discussing another special case of the left adjoint splitting,
which has a similar flavour to Voevodsky's universe model: the local universe model construction of Lumsdaine and Warren~\cite{10.1145/2754931}.
In this construction, they lift the requirement of having one fixed universe, 
and give conditions for 
which the strictification works with a weaker notion of stability of type formers. 
The local universe
model is not applicable for our case, as we are interested in building CwFs where type formers are
strictly stable.
\end{remark}

\section{Models of Type Theory}
\label{sec:background}
In this section, we review categories with 
families~\cite{DBLP:journals/corr/abs-1904-00827,dybjer96} 
and comprehension categories~\cite{jacobs93,jacobs99}. We also give intuition on the interpretation of type 
theory.  
For an introduction to fibrations, we refer the reader to the notes by Streicher~\cite{streicher18}.
 
\subsection{Categories with Families}
Categories with families (CwFs) are a class of models of type theory, where types and terms are interpreted as a presheaf over a 
category of contexts $\CC$, i.e. a functor
$T : \CC^{\op} \to \Fam(\hset)$.
CwFs are strict models meaning that types form a set and substitution is modelled strictly functorially.

In HoTT, the requirement that types and terms form a set is especially noteworthy.
As a consequence, one cannot, for example, define a CwF whose contexts are sets,
and whose types in a context $\Gamma$ are
functions $\Gamma \to \hset$, without assuming uniqueness of identity proofs.
This is because, for a $\Gamma: \hset$, the type of functions $\Gamma \to \hset$ is not a set in general.
Besides set models, this obstacle also applies to presheaf models, another important class of examples.

\begin{definition}[\cite{dybjer96},\coqlink{https://anonymous.4open.science/r/fossacs-strict-hott/src/CategoriesWithFamilies/CatsWithFams.v#L464}]
    \label{def:cwf}
    A \emphasize{category with families} (CwF) consists of:
    \begin{enumerate}
        \item a category $\CC$ with a terminal object;

        \item a functor $T : \CC^{\op} \to \Fam(\hset)$ mapping each $\Gamma : \CC$ to the set family 
        $(\Tm_{\Gamma}(A))_{A : \Ty(\Gamma)}$;
        we denote $T(s) (A,t)$ by $(A[s], t[s])$ for each $s : \Gamma \to \Delta$ and appropriate $A$ and $t$;

        \item a context comprehension operation which assigns to each $\Gamma : \CC$ and $A : \Ty(\Gamma)$ a tuple 
        $(\Gamma.A : \CC$, $p_A : \Gamma.A \to \Gamma, q_A : \Tm_{\Gamma.A}(A[p]))$, satisfying the following universal property: 
        for any $\Delta : \CC$, $s : \Delta \to \Gamma$ and $t : \Tm_\Delta(A[s])$, there exists a unique 
        $u : \Delta \to \Gamma.A$ such that $p_A \circ u = s$ and $t = q_A[u]$. 
        We denote this unique $u$ by $\langle s, t \rangle$.
    \end{enumerate}
\end{definition}

\begin{intuition}
    Objects of the category $\CC$ can be seen as contexts and morphisms as substitutions.
    The terminal object in $\CC$ gives the empty context. 
    The family $(\Tm_\Gamma(A))_{A : \Ty(\Gamma)}$ can be thought of as a family of terms indexed by types
    in the context $\Gamma$. Note that this is a family of sets, meaning that we have $\Ty(\Gamma) : \hset$  
    and $\Tm_\Gamma(A) : \hset$.
    For each $s : \Gamma \to \Delta$ in $\CC$,
    the two components of $T(s)$ give substitution of types and terms as follows.
    For each $A : \Ty(\Delta)$, $A[s] : \Ty(\Gamma)$
    is the application of the first component of $s$ to $A$, and for each $t : \Tm_\Delta(A)$, $t[s] : \Tm_\Gamma(A[s])$ is
    the application of the second component of $s$ to $t$.
    This means that substitution is strictly functorial for both types and terms by definition.
    The context $\Gamma.A$ can be thought of as $\Gamma$ extended by the type $A$, and $p_A$ as the weakening substitution.
     The term $q_A$ gives the variable term. The universal property can be understood as 
    substitutions being built out of terms in the usual sense, i.e. substitutions $s': \Gamma \to \Delta.A$ 
    are tuples $(s : \Gamma \to \Delta,t : \Tm_\Gamma(A[s]))$.
\end{intuition}

The type former structures on a CwF are discussed at length by Castellan, Clairambault, and Dybjer \cite{DBLP:journals/corr/abs-1904-00827}.
For brevity, we only sketch these definitions here; and refer to \cite{DBLP:journals/corr/abs-1904-00827} for full definitions.
Our focus is on unit, $\Pi$-, $\Sigma$-types,
and we also consider democratic CwFs.

\begin{definition}[{\cite[Definition~28]{DBLP:journals/corr/abs-1904-00827}},
    \coqlink{https://anonymous.4open.science/r/fossacs-strict-hott/src/CategoriesWithFamilies/CatsWithFams.v#L979}]
    A \emphasize{$\Pi$-type structure on a CwF} consists of, for each $\Gamma : \CC$, $A : \Ty(\Gamma)$ and $B : \Ty(\Gamma.A)$,
    a type $\PiCwf{A}{B} : \Ty(\Gamma)$, a function $\lambdaCwf{A}{B} : \Tm_{\Gamma.A}(B) \to \Tm_{\Gamma}(\PiCwf{A}{B})$
    and a function $\appCwf{A}{B} : \Tm_{\Gamma}(\PiCwf{A}{B}) \to \prod_{a : \Tm_{\Gamma}(A)}  \Tm_{\Gamma}(B[\langle \id{\Gamma}, a \rangle])$
    such that 
    $\lambda$ and $\app$ induce a bijection
    and $\Pi$, $\lambda$ and $\app$ are preserved under substitution.
\end{definition}

\begin{definition}[{\cite[Definition~26]{DBLP:journals/corr/abs-1904-00827}},
    \coqlink{https://anonymous.4open.science/r/fossacs-strict-hott/src/CategoriesWithFamilies/CatsWithFams.v#L1159}]
    A \emphasize{$\Sigma$-type structure on a CwF} consists of, for each $\Gamma : \CC$ $A : \Ty(\Gamma)$ and $B : \Ty(\Gamma.A)$,
    a type $\SigmaCwf{A}{B} : \Ty(\Gamma)$,
    projection functions 
    $
    \projoneCwf{A}{B} : \Tm_{\Gamma}(\SigmaCwf{A}{B}) \to \Tm_{\Gamma}(A)
    $
    and
    \[
    \projtwoCwf{A}{B} : \prod_{p : \Tm_{\Gamma}(\SigmaCwf{A}{B})} 
    \Tm_{\Gamma}\big(B[\langle \id{\Gamma},\projoneCwf{A}{B}(p)\rangle]\big),
    \]
    and a pairing function $\pairCwf{A}{B} : \prod_{a : \Tm_{\Gamma}(A)} \Tm_{\Gamma}(B[\langle \id{\Gamma}, a \rangle]) \to \Tm_{\Gamma}(\SigmaCwf{A}{B})$
    such that
    the projection and pairing functions induce a bijection, and $\Sigma$,
    $\mathsf{proj}^1$, $\mathsf{proj}^2$ and $\mathsf{pair}$ are preserved under substitution.
\end{definition}

\begin{definition}[\coqlink{https://anonymous.4open.science/r/fossacs-strict-hott/src/CategoriesWithFamilies/CatsWithFams.v#L845}]
A \emphasize{(strong) unit type structure on a CwF} consists of, for each
$\Gamma : \mathcal{C}$, a type $\UnitCwf{\Gamma}$ and a term $\ttCwf{\Gamma} : \Tm_\Gamma(\UnitCwf{\Gamma})$ 
such that $\ttCwf{\Gamma}$ is unique and for every $s : \Gamma \to \Delta$ in $\mathcal{C}$, we have
$\UnitCwf{\Delta}[s] = \UnitCwf{\Gamma}$.
\end{definition}

\begin{definition}[{\cite[Definition~3]{DBLP:journals/corr/abs-1904-00827}},\coqlink{https://anonymous.4open.science/r/fossacs-strict-hott/src/CategoriesWithFamilies/CatsWithFams.v#L1293}]
    A CwF is \emphasize{democratic} if for each $\Gamma : \CC$, there exists a type $\overline{\Gamma} : \Ty(\emptyctx)$ and an isomorphism $\isodem{\Gamma}:\Gamma \iso \emptyctx.\overline{\Gamma}$ in $\CC$.
\end{definition}

\subsection{Comprehension Categories}
Comprehension categories are a class of pseudomodels of type theory where substitution of types is modelled using a fibration of types over contexts. 
In particular, comprehension categories allow for the interpretation of pseudofunctorial substitution in general.
In HoTT, it is important to note that types in a comprehension category are generally not a set, whereas terms are, and substitution of terms in comprehension categories is strictly functorial.
In addition, the notion of comprehension category is more flexible than CwFs: it admits various examples in HoTT,
like the set model.
Below we present the usual definition of comprehension categories
as given, for example, by Jacobs~\cite[Definition~4.1]{jacobs93}.
In type theory, however, one would phrase the definition slightly differently,
formulating fibrations in terms of displayed categories~\cite[Definition 6.1]{AL19};
our formalisation is based on the displayed categorical variant.
Since we do not use any specifics of displayed categories in this paper,
we disregard these aspects.

\begin{definition}[{\cite[Definition~4.1]{jacobs93}}] \label{def:comp_cat}
A \emphasize{comprehension category} consists of a category $\CC$ with a terminal object $\emptyctx$, a fibration $p : \TT \to \CC$, and a functor $\chi : \TT \to \CC^\to$ preserving cartesian arrows, such that the following diagram commutes.
\[
\begin{tikzcd}
\TT \arrow[rr, "\chi"] \arrow[dr, "p" swap] && \CC^{\to} \arrow[dl, "\cod"] \\
& \CC
\end{tikzcd}
\]    
A comprehension category is \emphasize{split} if $p$ is split and \emphasize{full} if $\chi$ is fully faithful.
\end{definition}

\begin{intuition} 
    The objects of the category $\CC$ can be thought of as contexts, and the terminal object of $\CC$ as the empty context. 
For $\Gamma : \CC$, the objects in the fiber over $\Gamma$ can be thought of as types in that context.
Note that the objects of the categories $\CC$ and $\TT$ are not necessarily sets. 
This means that types in a comprehension category are not set-truncated. 
The reindexing functors given by the fibration model pseudofunctorial substitution (see \cref{sec:universe_comp_cat}).
The functor $\chi$ models context extension in the following sense.
For each $A : \TT_\Gamma$, we think of $\chi(A)$ as the projection $\pi_A : \Gamma.A \to \Gamma$.
Terms of the type theory are interpreted as sections of such projections; hence, terms of a given type form a set 
as the morphisms in $\CC$ form a set.
The preservation of cartesian morphisms under $\chi$ makes the interpretation of substitution of terms possible (see \cref{not:comp_cats}).
\end{intuition}

\begin{remark}
    The original definition of a comprehension category does not require $\CC$ to have terminals, i.e. empty context.
    Since our universes (see \cref{sec:universe_comp_cat}) use the empty context to yield a type in every context, we add this assumption here.
\end{remark}

Before continuing, we set some notation for comprehension categories aligned with how 
type-theoretical constructs are interpreted in comprehension categories~\cite{jacobs93,10.1145/3776725}. 
Note that \cref{notation:coercion} is less used in the literature. 
\begin{notation}
We introduce the following notation.
  \label{not:comp_cats}
    \begin{enumerate}
        \item We use $\pi_A$ to denote $\chi(A)$ and $\Gamma.A$ to denote $\dom(\chi(A))$, for $A : \TT_\Gamma$.
        \item We use $\tmincompcat{\Gamma}{A}$ to denote the sections of $\chi(A) : \Gamma.A \to \Gamma$ for $A : \TT_\Gamma$.
        \item We use $A[s]$ to denote the reindexing of $A : \TT_\Delta$ along $s : \Gamma \to \Delta$ in $\CC$.
        \item We use $t[s]$ to denote the morphism given by the universal property of the following pullback square given by $\chi$ preserving cartesian morphisms: \label{item:pullback}
  \[\begin{tikzcd}
	\Gamma & \Delta \\
	& {\Gamma.s^* A} & {\Delta.A} \\
	& \Gamma & \Delta
	\arrow["s", from=1-1, to=1-2]
	\arrow["{{t[s]}}"{description}, dashed, from=1-1, to=2-2]
	\arrow[curve={height=12pt}, equals, from=1-1, to=3-2]
	\arrow["t", from=1-2, to=2-3]
	\arrow["{{s.A}}", from=2-2, to=2-3]
	\arrow["{{\chi (s^* A)}}"{description}, from=2-2, to=3-2]
	\arrow["\lrcorner"{anchor=center, pos=0.125}, draw=none, from=2-2, to=3-3]
	\arrow["{{\chi A}}"{description}, from=2-3, to=3-3]
	\arrow["s"', from=3-2, to=3-3]
  \end{tikzcd}\]
  for a section $t$ of $\chi A$, $A : \TT_\Delta$ and $s : \Gamma \to \Delta$. 
        \item \label{notation:coercion} We use $\coerce{t}{a}$ to denote $\Gamma.a \circ t$ for each $a : A \to A'$ in $\TT_\Gamma$ and $t : \tmincompcat{\Gamma}{A}$, which is itself in $\tmincompcat{\Gamma}{A'}$.
        This notation is inspired by the literature on interpreting coercive subtyping as 
        vertical morphisms in $\TT$~\cite{coraglia_et_al:LIPIcs.TYPES.2023.3,10.1145/3776725}.
        Under this view, given a coercion witness $a : A \to A'$, one can coerce a term $t$ of type $A$ 
        to a term $\Gamma.a \circ t$ of type $A'$.
    \end{enumerate}
\end{notation}

For $\Sigma$- and $\Pi$-types, we use the definitions by Lumsdaine and Warren \cite{10.1145/2754931}; these correspond to their \emph{pseudo stable} $\Sigma$- and $\Pi$-types structures. For brevity, we give informal definitions 
here and refer the reader to \cite{10.1145/2754931} for full definitions.
In the following definitions, let $(\CC, \TT, p, \chi, \emptyctx)$ be a comprehension category.

\begin{definition}[{\cite[Definitions~3.4.2.1, 3.4.2.8]{10.1145/2754931}},
    \coqlink{https://anonymous.4open.science/r/fossacs-strict-hott/src/ComprehensionCats/CompCatTypeFormers.v#L436}] \label{def:pi-non-full}
    Given $\Gamma : \CC$, $A : \TT_\Gamma$ and $B : \TT_{\Gamma.A}$, a 
    \emphasize{$\Pi$-type structure} consists of an object $\PiCompCat{A}{B} : \TT_\Gamma$,
    a morphism $\appCompCat{A}{B} : \Gamma.A.\PiCompCat{A}{B}[\pi_A] \to \Gamma.A.B$ such that 
    $\pi_B \circ \appCompCat{A}{B} = \pi_{{\PiCompCat{A}{B}[\pi_A]}}$, and 
    a function $\lambdaCompCat{A}{B}$ giving for each section $b: \Gamma.A \to \Gamma.A.B$ of $\pi_B$, a section $\lambdaCompCat{A}{B} (b) : \Gamma \to \Gamma.\PiCompCat{A}{B}$ of $\pi_{\PiCompCat{A}{B}}$
    such that $\lambdaCompCat{A}{B}(-)$ and $\appCompCat{A}{B} \circ (-)[\pi_A]$ establish a bijection, and $\Pi$, $\mathsf{app}$ and $\lambda$ are preserved under substitution.
\end{definition}

\begin{definition}[{\cite[Definition~3.4.4.1]{10.1145/2754931}},
    \coqlink{https://anonymous.4open.science/r/fossacs-strict-hott/src/ComprehensionCats/CompCatTypeFormers.v#L109}] \label{def:sigma-non-full}
    Given $\Gamma : \CC$, $A : \TT_\Gamma$ and $B : \TT_{\Gamma.A}$, \emphasize{(strong) $\Sigma$-type structure} consists of an object $\SigmaCompCat{A}{B} : \TT_\Gamma$ and 
        an isomorphism $\pairCompCat{A}{B} : \Gamma.A.B \to \Gamma.\SigmaCompCat{A}{B}$ with inverse $\projCompCat{A}{B}$ 
      such that $\pi_{\SigmaCompCat{A}{B}} \circ \pairCompCat{A}{B} = \pi_A \circ \pi_B$, and 
$\Sigma$ and $\mathsf{proj}$ are preserved under substitution. 
\end{definition}

\begin{definition}[\coqlink{https://anonymous.4open.science/r/fossacs-strict-hott/src/ComprehensionCats/CompCatTypeFormers.v#L523}]
    \label{def:comp_cat_unit}
   Let $\Gamma : \CC$.
A \emphasize{(strong) unit type structure} in context $\Gamma$ consists of a type $\UnitCompCat{\Gamma} : \TT_\Gamma$ and
a unique term $\ttCompCat{\Gamma} : \tmincompcat{\Gamma}{\UnitCompCat{\Gamma}},$
i.e. we have $u = \ttCompCat{\Gamma}$  for every term $u : \tmincompcat{\Gamma}{\UnitCompCat{\Gamma}},$
such that $\UnitCompCat{-}$ is stable under substitution, i.e. we have
    $
    i_{\mathbf{1},s} : \UnitCompCat{\Delta}[s] \iso \UnitCompCat{\Gamma}
    $
    in $\TT_\Gamma$ for each $s : \Gamma \to \Delta$ in $\CC$.
\end{definition}

\begin{remark}
Unlike for $\Sigma$- and $\Pi$-types, our definition for the unit differs from that of
Lumsdaine and Warren~\cite{10.1145/2754931}.
They define unit types via an elimination and a computation rule,
whereas we use strong unit types with a unique term.
Uniqueness of this term is needed for having an empty (terminal) context 
in the resulting CwF of \cref{construction}.
Our unit structure induces theirs. 
\end{remark}

\section{Universes in Comprehension Categories}
\label{sec:universe_comp_cat}
The main ingredient of the universe model construction is a universe that contains the unit type
and is closed under $\Sigma$-types.
We set up this needed machinery by
defining universes of comprehension categories and their closure properties.
We also define when a universe contains $\Pi$-types,
so that we can reason about $\Pi$-types in the resulting model of the construction.
We use the definition of a universe in a comprehension category given by Van der Weide \cite[Definition~9.13]{vanderweide2025internallanguagesunivalentcategories}. 
A similar definition for categories with families can be found in Angiuli and Gratzer's book \cite[Structure~6.4.17]{angiuli2025principles}.

In this section, let $(\CC,\TT,p,\chi,\emptyctx)$ be a comprehension category. 

\begin{definition}[{\cite[Definition~9.13]{vanderweide2025internallanguagesunivalentcategories}},
  \coqlink{https://anonymous.4open.science/r/fossacs-strict-hott/src/ComprehensionCats/CompCatUniverse.v#L317}]
    \label{def:comp_cat_universe}
    A \emphasize{universe structure} on $\chi$ consists of:
\begin{enumerate}
\item an object $\U : \TT_{\emptyctx}$, which gives rise to $\U_\Gamma \coloneq \U[\iota_\Gamma]$ in each $\TT_\Gamma$ where $\iota_\Gamma : \Gamma \to \emptyctx$ is the terminal morphism;
\item for each $\Gamma : \CC$, a function $el : \tmincompcat{\Gamma}{\U_\Gamma} \to \TT_\Gamma$; \label{item:el}
\item \label{item:universe_subst}
for each $s : \Gamma \to \Delta$ and $t : \tmincompcat{\Delta}{\U_\Delta}$, an isomorphism
$
\Univiso{s}{t} : (\el (t))[s] \iso \el (\coerce{t[s]}{\Uiso{s}})
$ \label{item:isomorphism_universe}
stating that reindexing commutes with $\el$,
where $\Uiso{s} : \U_{\Delta}[s] \iso \U_\Gamma$ is the isomorphism 
$\U_{\emptyctx}[\iota_\Delta][s] \iso \U_{\emptyctx}[\iota_\Gamma]$
given by the universal property of the terminal object $\emptyctx$;
\end{enumerate}
such that the universe is coherent under substitution as stated in \cref{sec:coherences}.
\end{definition}

One could equivalently define $\el$ as a type in context 
$\emptyctx.\U$ and obtain the function in \cref{item:el} by
reindexing along the canonical morphism $\Gamma  \to \emptyctx.\U$.
We choose \cref{def:comp_cat_universe},
as it is closer to the type theoretic presentation of universes.

 \begin{notation}
    \label{notation:Utermpulledback}
    For $s : \Gamma \to \Delta$ in $\CC$, we denote by $\Uiso{s}$ the isomorphism $\U_{\Delta}[s] \iso \U_\Gamma$ from
    \cref{item:isomorphism_universe} of \cref{def:comp_cat_universe}.
\end{notation}

We are now ready to define the closure properties of the universe under type formers. We define what it means for a universe to be 
closed under $\Sigma$-, $\Pi$- and unit types. 
Let us first give an intuition of what these closure properties should look like. 
For $\Sigma$-types, for example,
we say that whenever we have terms $a$ and $b$ of type $\U$, their $\Sigma$-type also lies in $\U$.
Specifically,
for each
$a,b : \tmincompcat{}{\U}$ with $b$ depending on $a$, we have $\SigmaUniverse{a}{b}$ in $\U$.
These $\Sigma$-codes should also be preserved under substitution.
Additionally, we want to 
have an isomorphism $\el(\SigmaUniverse{a}{b}) \iso \SigmaCompCat{\el(a)}{\el(b)}$, also stable under substitution.
Similar closedness definitions are given by Van der Weide~\cite{vanderweide2025internallanguagesunivalentcategories} 
for categories with finite limits. 

For brevity, we state the coherences of \cref{def:comp_cat_univ_closed_sigma,def:comp_cat_univ_closed_pi} in \cref{sec:coherences}.
\begin{definition}[\coqlink{https://anonymous.4open.science/r/fossacs-strict-hott/src/ComprehensionCats/CompCatUniverse.v#L1416}]
\label{def:comp_cat_univ_closed_sigma}
Let $(\CC,\TT,p,\chi, \emptyctx)$ support $\Sigma$-types.
A universe $(\U,\el)$ is said to be \emphasize{closed under $\Sigma$-types}
if for each $\Gamma : \CC$, $a : \tmincompcat{\Gamma}{\U_\Gamma}$ and $b : \tmincompcat{\Gamma.\el(a)}{\U_{\Gamma.\el(a)}}$,
we have a term $\SigmaUniverse{a}{b} : \tmincompcat{\Gamma}{\U_\Gamma}$ and an isomorphism 
$
i_{\SigmaUniverse{a}{b}} : \el(\SigmaUniverse{a}{b}) \iso \SigmaCompCat{\el(a)}{\el(b)},
$
where $\SigmaCompCat{\el(a)}{\el(b)}$ is the $\Sigma$-type in the comprehension category, such that 
reindexing commutes with $\Sigma$-codes, i.e. for each $s : \Delta \to \Gamma$ in $\CC$, we have 
    \[ \coerce{\SigmaUniverse{a}{b}[s]}{\Uiso{s}} = \SigmaUniverse{\coerce{a[s]}{\Uiso{s}}}{\coerce{b[s.\el(a) \circ \Delta.i^{-1}_{s,a}]}
    {\Uiso{s.\el(a)}}},\]
    and $i_{\SigmaUniverse{a}{b}}$ is coherent under reindexing as stated in \cref{sec:coherences}.
\end{definition}

We define closure under $\Pi$-types analogously as follows.

\begin{definition}[\coqlink{https://anonymous.4open.science/r/fossacs-strict-hott/src/ComprehensionCats/CompCatUniverse.v#L1877}]
\label{def:comp_cat_univ_closed_pi}
Let $(\CC,\TT,p,\chi, \emptyctx)$ support $\Pi$-types.
A universe $(\U,\el)$ is said to be \emphasize{closed under $\Pi$-types}
if for each $\Gamma : \CC$, $a : \tmincompcat{\Gamma}{\U_\Gamma}$ and $b : \tmincompcat{\Gamma.\el(a)}{\U_{\Gamma.\el(a)}}$,
we have a term $\PiUniverse{a}{b} : \tmincompcat{\Gamma}{\U_\Gamma}$ and an isomorphism 
$
i_{\PiUniverse{a}{b}} : \el(\PiUniverse{a}{b}) \iso \PiCompCat{\el(a)}{\el(b)},
$
where $\PiCompCat{\el(a)}{\el(b)}$ is the $\Pi$-type in the comprehension category,
such that  reindexing commutes with $\Pi$-codes, i.e. for each $s : \Delta \to \Gamma$ in $\CC$, we have 
    \[ \coerce{\PiUniverse{a}{b}[s]}{\Uiso{s}} = \PiUniverse{\coerce{a[s]}{\Uiso{s}}}{\coerce{b[s.\el(a) \circ \Delta.i^{-1}_{s,a}]}
    {\Uiso{s.\el(a)}}},\]
    and $i_{\PiUniverse{a}{b}}$ is coherent under reindexing as stated in \cref{sec:coherences}.
\end{definition}

Closure for unit types is similar to the closure for $\Sigma$- and $\Pi$-Types, 
but preservations under substitution are derivable in this case.

\begin{definition}[\coqlink{https://anonymous.4open.science/r/fossacs-strict-hott/src/ComprehensionCats/CompCatUniverse.v#L1630}]
    \label{def:comp_cat_univ_closed_unit}
    Let $(\CC,\TT,p,\chi,\emptyctx)$ support unit types.
    A universe $(\U,\el)$ is said to be \emphasize{closed under unit types}
    if we have a term $\UnitUniverse{} : \tmincompcat{\emptyctx}{\U_{\emptyctx}}$, which induces a term $\UnitUniverse{\Gamma} : \tmincompcat{\Gamma}{\U_\Gamma}$ for each $\Gamma: \CC$, and
    an isomorphism 
    $ i_{\UnitUniverse{}}: \el(\UnitUniverse{}) \iso \UnitCompCat{\emptyctx},
    $
    where $\UnitCompCat{\emptyctx}$ is the unit in the comprehension category.
\end{definition}

\begin{proposition}[\coqlink{https://anonymous.4open.science/r/fossacs-strict-hott/src/ComprehensionCats/CompCatUniverse.v#L1668} and 
    \coqlink{https://anonymous.4open.science/r/fossacs-strict-hott/src/ComprehensionCats/CompCatUniverse.v#L1785}]
    \label{prop:unit-subst}
    The unit codes defined in \cref{def:comp_cat_univ_closed_unit} are stable under substitution and $\el(\UnitUniverse{})$ has a unique term.     
\end{proposition}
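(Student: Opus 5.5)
The statement has two parts: the unit codes are stable under substitution, and $\el(\UnitUniverse{})$ has a unique term. The plan is to prove uniqueness first and then derive stability from it.

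\textbf{Uniqueness of terms.} Transport along the isomorphism $i_{\UnitUniverse{}} : \el(\UnitUniverse{}) \iso \UnitCompCat{\emptyctx}$ in $\TT_{\emptyctx}$. Coercion along a vertical isomorphism gives a bijection
\[
\tmincompcat{\emptyctx}{\el(\UnitUniverse{})} \iso \tmincompcat{\emptyctx}{\UnitCompCat{\emptyctx}},
\qquad t \mapsto \coerce{t}{i_{\UnitUniverse{}}},
\]
with inverse $\coerce{(-)}{i^{-1}_{\UnitUniverse{}}}$. This works because $\emptyctx.(-)$ is functorial and commutes with $\pi$, so both sides are sections and the two maps are mutually inverse.

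By \cref{def:comp_cat_unit} the right-hand side is contractible, with centre $\ttCompCat{\emptyctx}$. Hence the left-hand side is contractible, with centre $\coerce{\ttCompCat{\emptyctx}}{i^{-1}_{\UnitUniverse{}}}$. The same argument works in any context $\Gamma$ for $\el(\UnitUniverse{\Gamma})$, using $\Univiso{\iota_\Gamma}{\UnitUniverse{}}$ together with the stability isomorphisms $i_{\mathbf{1},\iota_\Gamma}$. I would record this general form, since the substitution part needs it.

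\textbf{Stability under substitution.} Define $\UnitUniverse{\Gamma} \coloneq \coerce{\UnitUniverse{}[\iota_\Gamma]}{\Uiso{\iota_\Gamma}}$. For $s : \Gamma \to \Delta$ we must show
\[
\coerce{\UnitUniverse{\Delta}[s]}{\Uiso{s}} = \UnitUniverse{\Gamma}.
\]
The main work is a chain of rewrites:
\begin{enumerate}
\item Term substitution is strictly functorial, so $t[\iota_\Delta][s] = t[\iota_\Delta \circ s]$ up to the canonical comparison of reindexings.
\item $\iota_\Delta \circ s = \iota_\Gamma$ by terminality.
\item Coercion commutes with substitution: $(\coerce{t}{a})[s] = \coerce{t[s]}{a[s]}$.
\item The coherence of $\Uiso{-}$ from \cref{sec:coherences} states that $\Uiso{s}$ composed with $\Uiso{\iota_\Delta}[s]$ equals $\Uiso{\iota_\Gamma}$ composed with the comparison isomorphism.
\end{enumerate}
Combining these reduces both sides to the same term. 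This is the expected obstacle: it consists purely of bookkeeping of the coherence isomorphisms, which I would discharge by unfolding the universal property of the pullbacks defining $t[s]$.

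A cleaner fallback, if a statement is only needed at the level of $\el$, is the following. Both $\el(\coerce{\UnitUniverse{\Delta}[s]}{\Uiso{s}})$ and $\el(\UnitUniverse{\Gamma})$ are isomorphic to $\UnitCompCat{\Gamma}$, via $\Univiso{s}{-}$, $i_{\UnitUniverse{}}$ and $i_{\mathbf{1},-}$. This suffices for comparing types up to isomorphism. The equality of codes, however, genuinely needs the computation in the previous paragraph, so I would do that computation directly.
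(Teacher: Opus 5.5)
Your proof is correct, and its uniqueness half is the paper's argument. The paper's entire proof is that everything ``follows from uniqueness of $\mathsf{tt}$'', which is exactly your transport of $\ttCompCat{\emptyctx}$ across $i_{\UnitUniverse{}}$. Recording the general-context version is worthwhile: via $\Univiso{\iota_\Gamma}{\UnitUniverse{}}$, the reindexed $i_{\UnitUniverse{}}$ and $i_{\mathbf{1},\iota_\Gamma}$, you get uniqueness in every context. That is the form used later, e.g.\ to make $\pi_{\el(\UnitUniverse{})}$ invertible in \cref{thm:democracy}, and for the unit term in context $\emptyctx.\el(\Gamma)$ in \cref{thm:universe_unit}.

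For stability you take a different route. You prove $\coerce{\UnitUniverse{\Delta}[s]}{\Uiso{s}} = \UnitUniverse{\Gamma}$ by direct computation: functoriality of term substitution, commutation of coercion with reindexing, and terminality. That is the right tool for this equation, since uniqueness of $\mathsf{tt}$ cannot identify two terms of $\U_\Gamma$. Your computation shows the equation holds for the weakening of any closed term of $\U$, without using the unit structure at all.

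One correction: your step 4 is not among the coherences of \cref{sec:coherences}, which concern $\Univiso{s}{t}$. It is derivable, though. Both sides are vertical maps $\U_{\emptyctx}[\iota_\Gamma] \to \U_\Gamma$ with the same composite into the cartesian lift $\U_\Gamma \to \U$, and vertical maps into a cartesian lift are determined by that composite.

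Uniqueness of $\mathsf{tt}$ genuinely does the work in the coherence of the induced isomorphisms $\el(\UnitUniverse{\Gamma}) \iso \UnitCompCat{\Gamma}$ under reindexing. This is the unit analogue of the last clause of \cref{def:sigma-coherence}. Two context morphisms $\Gamma.X \to \Delta.\UnitCompCat{\Delta}$ over $s \circ \pi_X$ correspond, via the pullback from $\chi$, to terms of $\UnitCompCat{\Delta}[s \circ \pi_X] \iso \UnitCompCat{\Gamma.X}$, so they agree. Your fallback stops at exhibiting isomorphisms. This coherence is not needed for \cref{thm:universe_unit} or \cref{thm:democracy}, but if ``stable'' is read to include it, it follows in one line from your general-context uniqueness.
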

\begin{proof}
    It follows from uniqueness of $\ttCompCat{}$.
\end{proof}

\section{The Universe Model Construction}
\label{sec:construction}
In this section, we present our universe model construction, inspired by that of Voevodsky \cite{voevodsky2015csystemdefineduniversecategory}:
we construct a CwF from a comprehension category with a universe closed under unit and $\Sigma$-types. 
In short, being closed under unit ensures having an empty context in the resulting CwF and being closed
under $\Sigma$ gives context extension.
We then show that the resulting CwF is democratic and the construction preserves
type formers. In particular, the resulting CwF comes with $\Sigma$- and unit types 
using the closedness properties 
of the universe under these type formers.
We also show that if the universe is closed under $\Pi$-types, the resulting CwF
has $\Pi$-types that mimic the $\Pi$-types structure of the universe. 

In the rest of this section,  let $(\CC,\TT,p,\chi,\emptyctx)$ be a comprehension category with a universe $(\U,\el)$ 
closed under unit and $\Sigma$-types.

\begin{construction}[\coqlink{https://anonymous.4open.science/r/fossacs-strict-hott/src/ComprehensionCats/CwfFromCompCatWithUniv.v#L445}]
    \label{construction}
    We construct a CwF $\constructedcwf{\CC}{\U}$ as follows:
    \begin{itemize}
    \item the category of contexts has $\tmincompcat{\emptyctx}{\U}$ as objects, and morphisms are context morphisms of the form $\emptyctx.\el(\Gamma) \to \emptyctx.\el(\Delta)$ in the comprehension category;
    \item the terminal context of the CwF is then given by $\UnitUniverse{}$, the code of the unit in the universe;
    \item types in a CwF-context $\Gamma : \tmincompcat{\emptyctx}{\U}$ are $\tmincompcat{\emptyctx.\el(\Gamma)}{\U_{\emptyctx.\el(\Gamma)}}$ 
    in the comprehension category;
    \item terms of a CwF-type $A : \tmincompcat{\emptyctx.\el(\Gamma)}{\U_{\emptyctx.\el(\Gamma)}}$ in a CwF-context 
    $\Gamma : \tmincompcat{\emptyctx}{\U}$ are $\tmincompcat{\emptyctx.\el(\Gamma)}{\el(A)}$ in the comprehension category;
    \item substitution of both types and terms in the CwF is given by substitution of terms in the
    comprehension category in the following sense:
    for a CwF-type $A : \tmincompcat{\emptyctx.\el(\Gamma)}{\U_{\emptyctx.\el(\Gamma)}}$ 
    and a context morphism $s : \emptyctx.\el(\Delta) \to \emptyctx.\el(\Gamma)$, the 
    resulting substituted CwF-type is given by $\coerce{A[s]}{\Uiso{s}}$, where $A[s]$ denotes the
    substitution of terms in the comprehension category as defined in \cref{not:comp_cats} and
    $\Uiso{s}$ is the isomorphism explained in \cref{notation:Utermpulledback}.
    Similarly, for a CwF-term $t : \tmincompcat{\emptyctx.\el(\Gamma)}{\el(A)}$, the substituted term
    is  $\coerce{t[s]}{\Univiso{s}{t}}$;
    \item context extension follows from the universe being closed under $\Sigma$-types in the following sense.
    Given a CwF-context $\Gamma : \tmincompcat{\emptyctx}{\U}$ and a CwF-type $A : \tmincompcat{\emptyctx.\el(\Gamma)}{\U_{\emptyctx.\el(\Gamma)}}$, 
    the extended context $\cwfextention{\Gamma}{A}$ 
    is given by $\SigmaUniverse{\Gamma}{A}$.
    For projection and the term $q_A$, let us first introduce a morphism we call $\overline{\mathsf{proj}}$ defined as follows:
    \[\begin{tikzcd}[ampersand replacement=\&, column sep = large]
	    {\emptyctx.\el(\SigmaUniverse{\Gamma}{A})} \& {\emptyctx.\SigmaCompCat{{\el(\Gamma)}}{\el(A)}} \& {\emptyctx.\el(\Gamma).\el(A)}.
	    \arrow["{\emptyctx.i_{\SigmaUniverse{\Gamma}{A}}}", from=1-1, to=1-2]
	    \arrow["{\projCompCat{\el(\Gamma)}{\el(A)}}", from=1-2, to=1-3]
    \end{tikzcd}\]  
    The projection $p_{\cwfextention{\Gamma}{A}}$ is given by $\pi_{\el(A)} \circ \mathsf{proj}$ and $q_A$ is defined as $\coerce{v_{\el(A)} [\overline{\mathsf{proj}}]}{e}$ in the comprehension category,
    where $v_{\el(A)}$ is the variable term of the comprehension category, 
    and $e$ is the coherence isomorphism $(\el(A)[\pi_{\el(A)}])[\overline{\mathsf{proj}}] \iso
    \el(A [\pi_{\el(A)} \circ \overline{\mathsf{proj}}]) $.
    
\end{itemize}
\end{construction}

\begin{theorem}[\coqlink{https://anonymous.4open.science/r/fossacs-strict-hott/src/ComprehensionCats/CwfFromCompCatWithUniv.v#L1026}]
    \label{thm:construction}
    \cref{construction} yields a CwF $\constructedcwf{\CC}{\U}$.
\end{theorem}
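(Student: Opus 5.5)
We verify the three clauses of \cref{def:cwf} in turn, after first checking the set-level requirements.

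\emph{Set-level requirements.} Since $\tmincompcat{\Gamma}{A}$ is a subtype of the hom-set $\Gamma.A \to \Gamma$ cut out by the proposition $\pi_A \circ t = \id{\Gamma}$, every collection of terms in the comprehension category is a set. Hence the CwF-types $\tmincompcat{\emptyctx.\el(\Gamma)}{\U_{\emptyctx.\el(\Gamma)}}$ and the CwF-terms $\tmincompcat{\emptyctx.\el(\Gamma)}{\el(A)}$ are sets. The category of contexts is the full image of the functor $\Gamma \mapsto \emptyctx.\el(\Gamma)$. Its identities, composition and laws are inherited from $\CC$, and its hom-types are sets. No univalence is required, since a CwF only asks for a category.

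\emph{The functor $T$.} We must show that the substitution $A \mapsto \coerce{A[s]}{\Uiso{s}}$ (and the analogous operation on terms) is strictly functorial. For terms, $t[s]$ is defined by a universal property of a pullback, so $t[\id{}] = t$ and $t[s \circ s'] = t[s][s']$ up to the canonical vertical isomorphisms between iterated reindexings. Because coercion along a vertical morphism commutes with term substitution, the substitution identities reduce to equations between composites of vertical isomorphisms, namely $\Uiso{\id{}} = $ (the reindexing-by-identity iso) and $\Uiso{s\circ s'} = \Uiso{s'} \circ \Uiso{s}[s'] \circ$ (the composition iso). These hold because $\U_\Gamma$ is defined by reindexing along the terminal map, so they follow from uniqueness of cartesian factorisations. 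For terms we use the coherence conditions of the universe (\cref{sec:coherences}) in the same way: they state exactly that $\Univiso{\id{}}{t}$ and $\Univiso{s\circ s'}{t}$ are the expected composites. All these equalities are equalities in sets, and the action on terms is fibred over the action on types, so no higher coherence is needed. Dependent transport of terms along the type equalities is handled by working with the underlying morphisms of $\CC$.

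\emph{Terminal object and comprehension.} For the terminal object, take $\UnitUniverse{}$. A morphism $\emptyctx.\el(\Delta) \to \emptyctx.\el(\UnitUniverse{})$ over $\emptyctx$ is the same as a term of $\el(\UnitUniverse{})[\iota]$. By \cref{prop:unit-subst} this type has a unique term, so the hom-set is contractible.

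For comprehension, given $s : \emptyctx.\el(\Delta) \to \emptyctx.\el(\Gamma)$ and a CwF-term $t$ of $\coerce{A[s]}{\Uiso{s}}$, we first coerce $t$ along $\Univiso{s}{A}^{-1}$ to a term of $\el(A)[s]$. This yields a map $\emptyctx.\el(\Delta) \to \emptyctx.\el(\Gamma).\el(A)$ via the pullback of $\pi_{\el(A)}$ along $s$. We then compose with $\overline{\mathsf{proj}}^{-1} = (\emptyctx.i_{\SigmaUniverse{\Gamma}{A}})^{-1} \circ \pairCompCat{\el(\Gamma)}{\el(A)}$. Since $\overline{\mathsf{proj}}$ is an isomorphism, maps into $\emptyctx.\el(\SigmaUniverse{\Gamma}{A})$ correspond bijectively to maps into $\emptyctx.\el(\Gamma).\el(A)$. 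Those in turn correspond, by the pullback property, to pairs consisting of a map $s$ and a section of the reindexed type. Uniqueness of $\langle s,t\rangle$ follows from the uniqueness of each of these correspondences.

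\emph{Main obstacle.} The main obstacle is verifying $q_A[\langle s,t\rangle] = t$. This requires unfolding $q_A = \coerce{v_{\el(A)}[\overline{\mathsf{proj}}]}{e}$ together with the term-substitution coercion $\Univiso{\langle s,t\rangle}{\cdot}$. One then reduces to $v_{\el(A)}[\overline{\mathsf{proj}}\circ\langle s,t\rangle] = t$ modulo a composite of coherence isomorphisms, and shows that this composite is the identity. We would attack this by comparing underlying morphisms in $\CC$, i.e. after postcomposing with $\emptyctx.\el(\ldots)$ maps, where terms are determined by their composite with the comprehension. There the equation reduces to the defining property of the pullback and the universe coherence for composite substitutions.
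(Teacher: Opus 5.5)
Your proposal is correct and follows essentially the same route as the paper. In both, $\langle s,t\rangle$ is the pullback-induced map $s.\el(A)\circ t'$, with $t'$ the coercion of $t$ along $(\Univiso{s}{A})^{-1}$, transported across the isomorphism $\overline{\mathsf{proj}}$; the argument rests on the pullback universal property of \cref{not:comp_cats} and the coherences of $\U$, and sethood, strict functoriality and terminality of $\UnitUniverse{}$ are the routine checks the paper leaves implicit.

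One point you should state explicitly. Your chain of bijections determines $u$ by $p_A\circ u$ and the pullback section of $\overline{\mathsf{proj}}\circ u$, not by $\cwfsubst{q_A}{u}$. So CwF-uniqueness only follows once you combine it with your ``main obstacle'' equation $\cwfsubst{q_A}{\langle s,t\rangle}=t$: a bijection $\langle -,-\rangle$ that has $(p_A\circ-,\cwfsubst{q_A}{-})$ as a one-sided inverse has it as its two-sided inverse. That equation is exactly where the paper's appeal to the coherence of $\U$ enters.
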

\begin{proofsketch}
  We need to show the uniqueness of the extended substitution
  $\langle s, t\rangle$, which is defined as $s.A \circ t[s]$ using the analogous operation in comprehension categories.
  Uniqueness follows from the universal property of the pullback from \cref{item:pullback} of 
  \cref{not:comp_cats} and the coherence of $\U$ (see \cref{def:universe-coherence} in \cref{sec:coherences}).
\end{proofsketch}

Recall that our goal was to construct CwF models of type theory in HoTT without assuming types are set-truncated.
To see why \cref{construction} achieves this, we check two things: that $\constructedcwf{\CC}{\U}$ is 
a CwF and that our starting model has no set-truncation assumption on its types.
For the first, we check two things: that types in $\constructedcwf{\CC}{\U}$
form an $\hset$ and that $\constructedcwf{\CC}{\U}$ is a strict model.
Both follow from types of
$\constructedcwf{\CC}{\U}$ being defined as certain terms of the comprehension category, which, as explained in \cref{sec:background}, are defined as certain morphisms in the base category and form an $\hset$.
For the second point, comprehension categories are pseudomodels with no assumption
on set-truncation of types, and are general enough to admit a variety of examples such as set models, 
presheaf models and assembly models.

\begin{remark}
    \label{rem:fullness}
    When interpreting type theory in comprehension categories, 
    one usually assumes that the comprehension category is full.
    We do not make this assumption meaning that 
    \cref{construction} works for general comprehension categories,
    which are used to model more expressive type theories~\cite{adjedj:hal-05167997,coraglia_et_al:LIPIcs.TYPES.2023.3,10.1145/3776725}.
\end{remark}

Before proceeding, we point out the similarity between this construction and the \emph{externalisation} of internal categories.
Specifically,
every internal category in a finite complete category $\CC$ gives rise to a split (Grothendieck) fibration over $\CC$.
Since universes give rise to full internal categories,
they give rise to a CwF via externalisation.
However, note that the universe model construction is formulated in more generality:
rather than only considering finitely complete categories,
we consider all comprehension categories with a universe.

The interpretation of higher-order logic in a topos~\cite{lambek:1988} uses similar ideas as \cref{construction}.
Every topos has a subobject classifier,
giving a universe of propositions~\cite{maietti:2005},
that contains the unit type (i.e., the truth proposition),
and is closed under $\Sigma$-types (i.e., existential quantification).
Propositions are interpreted as terms of the subobject classifier
rather than as monomorphisms.

\subsubsection*{Type Formers and Democracy} 
We now show that $\constructedcwf{\CC}{\U}$ is democratic and supports unit and $\Sigma$-types, and if $\U$ is closed under $\Pi$-types, it also supports $\Pi$-types.
In the remainder of this section, let $(\CC,\TT,p,\chi)$ be a comprehension category with 
a universe $(\U,\el)$ closed under unit and $\Sigma$-types. 
We use the shorthands $\cwfsubst{A}{s}$ and $\cwfsubst{t}{s}$ to 
denote the substitution of types and terms in $\constructedcwf{\CC}{\U}$ as explained in \cref{construction},
and $\cwffont{\UnitCwf{\Gamma}}$, $\cwffont{\PiCwf{A}{B}}$ and $\cwffont{\SigmaCwf{A}{B}}$ to 
denote unit, $\Sigma$ and $\Pi$ in $\constructedcwf{\CC}{\U}$.

\begin{theorem}[\coqlink{https://anonymous.4open.science/r/fossacs-strict-hott/src/ComprehensionCats/CwfFromCompCatWithUnivUnitDemocracy.v#L166}]
    \label{thm:democracy}
    $\constructedcwf{\CC}{\U}$ is democratic.
\end{theorem}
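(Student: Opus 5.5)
The plan is to take, for each CwF-context $\Gamma : \tmincompcat{\emptyctx}{\U}$, the type $\overline{\Gamma}$ in the empty CwF-context $\UnitUniverse{}$ to be $\Gamma$ itself, weakened along the context $\emptyctx.\el(\UnitUniverse{})$. Concretely, set
\[
\overline{\Gamma} \coloneq \coerce{\Gamma[\iota]}{\Uiso{\iota}} : \tmincompcat{\emptyctx.\el(\UnitUniverse{})}{\U_{\emptyctx.\el(\UnitUniverse{})}},
\]
where $\iota : \emptyctx.\el(\UnitUniverse{}) \to \emptyctx$ is the terminal morphism. This is the CwF-substitution of the closed code $\Gamma$ along the unique map to the empty context of $\CC$. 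The extended context $\emptyctx.\overline{\Gamma}$ in $\constructedcwf{\CC}{\U}$ is then $\SigmaUniverse{\UnitUniverse{}}{\overline{\Gamma}}$. Since morphisms of $\constructedcwf{\CC}{\U}$ are morphisms $\emptyctx.\el(-) \to \emptyctx.\el(-)$ in $\CC$, it suffices to build an isomorphism in $\CC$ between $\emptyctx.\el(\Gamma)$ and $\emptyctx.\el(\SigmaUniverse{\UnitUniverse{}}{\overline{\Gamma}})$.

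I would compose this isomorphism from four pieces.
\begin{enumerate}
\item The map $\emptyctx.i_{\SigmaUniverse{\UnitUniverse{}}{\overline{\Gamma}}}$ followed by $\projCompCat{}{}$ gives an isomorphism $\emptyctx.\el(\SigmaUniverse{\UnitUniverse{}}{\overline{\Gamma}}) \iso \emptyctx.\el(\UnitUniverse{}).\el(\overline{\Gamma})$. This is exactly the map $\overline{\mathsf{proj}}$ of \cref{construction}, which is an isomorphism because both of its factors are.
\item The iso $\Univiso{\iota}{\Gamma}$ gives $\el(\overline{\Gamma}) \iso \el(\Gamma)[\iota]$ in the fiber. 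Applying comprehension, which sends vertical isos to isos, yields $\emptyctx.\el(\UnitUniverse{}).\el(\overline{\Gamma}) \iso \emptyctx.\el(\UnitUniverse{}).\el(\Gamma)[\iota]$.
\item Because $\chi$ preserves cartesian maps, $\emptyctx.\el(\UnitUniverse{}).\el(\Gamma)[\iota]$ is the pullback of $\pi_{\el(\Gamma)} : \emptyctx.\el(\Gamma) \to \emptyctx$ along $\pi_{\el(\UnitUniverse{})} : \emptyctx.\el(\UnitUniverse{}) \to \emptyctx$.
\item It remains to show that $\pi_{\el(\UnitUniverse{})}$ is itself an isomorphism. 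A pullback along an isomorphism is an isomorphism onto the other leg, so this gives $\emptyctx.\el(\UnitUniverse{}).\el(\Gamma)[\iota] \iso \emptyctx.\el(\Gamma)$.
\end{enumerate}
Composing the inverses of these maps gives $\isodem{\Gamma}$.

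The main obstacle is step 4: showing that $\pi_{\el(\UnitUniverse{})} : \emptyctx.\el(\UnitUniverse{}) \to \emptyctx$ is an isomorphism. I would first use \cref{prop:unit-subst}, which says $\el(\UnitUniverse{})$ has a unique term $t$ in context $\emptyctx$; being a section, $t$ satisfies $\pi \circ t = \id{\emptyctx}$. For the other composite, $t \circ \pi = \id{\emptyctx.\el(\UnitUniverse{})}$, I would view both $t\circ\pi$ and $\id{}$ as sections of the weakened type $\el(\UnitUniverse{})[\pi]$ over $\emptyctx.\el(\UnitUniverse{})$. The identity corresponds to the diagonal section, and $t \circ \pi$ to the section $t[\pi]$. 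Transporting along the stability iso of \cref{prop:unit-subst} (unit codes are stable under substitution, with $i_{\UnitUniverse{}}$ and $i_{\mathbf{1},s}$), both become terms of $\UnitCompCat{\emptyctx.\el(\UnitUniverse{})}$. Such terms are unique by \cref{def:comp_cat_unit}, so the two sections agree, and the pullback's universal property then forces $t \circ \pi = \id{}$. Everything outside step 4 only assembles existing isomorphisms, and no naturality condition needs checking, since democracy asks only for the existence of an isomorphism for each $\Gamma$.
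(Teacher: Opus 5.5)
Your proposal is correct and follows the paper's proof. You make the same choice of $\overline{\Gamma}$ (the CwF-substitution of $\Gamma$ along the terminal map $\pi_{\el(\UnitUniverse{})}$), the same reduction to an isomorphism $\emptyctx.\el(\Gamma) \iso \emptyctx.\el(\SigmaUniverse{\UnitUniverse{}}{\overline{\Gamma}})$ in $\CC$, and the same key input, \cref{prop:unit-subst}; your four-step decomposition, in particular the argument that $\pi_{\el(\UnitUniverse{})}$ is invertible, simply spells out what the paper compresses into ``follows from''.
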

\begin{proofsketch}[Proof in \cref{sec:proofs}]
    For each context $\Gamma : \tmincompcat{\emptyctx}{\U}$, we have a canonical type $\bar{\Gamma}$ in the empty CwF-context 
    defined as follows in $\CC$:
    \[\begin{tikzcd}[ampersand replacement=\&,row sep=large]
        {\emptyctx.\el(\UnitUniverse{\emptyctx})} \& \emptyctx \\
        \& {\emptyctx.\el(\UnitUniverse{\emptyctx}).\U} \& {\emptyctx.\U} \\
        \& {\emptyctx.\el(\UnitUniverse{\emptyctx})} \& \emptyctx.
        \arrow["\pi_{\el(\UnitUniverse{\emptyctx})}", from=1-1, to=1-2]
        \arrow["{\bar{\Gamma}}", dashed, from=1-1, to=2-2]
        \arrow[curve={height=12pt}, equals, from=1-1, to=3-2]
        \arrow["\Gamma", from=1-2, to=2-3]
        \arrow[from=2-2, to=2-3]
        \arrow[from=2-2, to=3-2]
        \arrow["\lrcorner"{anchor=center, pos=0.125}, draw=none, from=2-2, to=3-3]
        \arrow[from=2-3, to=3-3]
        \arrow["\pi_{\el(\UnitUniverse{\emptyctx})}"', from=3-2, to=3-3]
    \end{tikzcd}\]
    The universal property of the pullback gives the required isomorphism.
\end{proofsketch}

\begin{remark}
\label{rem:democracy_remark}
In Voevodsky's original construction, a contextual category is built from a universe category without restricting
the category of contexts. We restrict it to obtain democratic CwFs (\cref{thm:democracy}),
as democratic CwFs with extensional identity types and $\Sigma$-types
are equivalent to categories with finite limits
and that equivalence can be extended to various notions of categories~\cite{claira11,vanderweide2025internallanguagesunivalentcategories}.
For axiomatic identity types, similar assumptions yield analogous equivalences~\cite{otten:2026}.
Without this restriction, the universe need not be closed under unit and $\Sigma$-types.
\end{remark}

The type former structures of $\constructedcwf{\CC}{\U}$ closely mimics the codes of 
the type formers of the comprehension category in the universe, as defined in 
\cref{def:comp_cat_univ_closed_sigma,def:comp_cat_univ_closed_pi,def:comp_cat_univ_closed_unit}.
We also have to show that the resulting type former structure is (strictly) 
stable under substitution.  
One of the obstacles in the following proofs, is the difference between how the 
type former structure is defined in a CwF compared to a comprehension category. 
Specifically, terms in a comprehension category are defined as sections of 
the projection morphisms of the form $\Gamma.A \to \Gamma$, whereas terms in a CwF are
a primitive notion. Another difference is that variable terms and extension 
of substitution are defined using the universal property of certain pullbacks 
in comprehension categories, whereas they are primitive in a CwF.
In our formalisation, we first bridge this gap by deriving the type former
structure of a CwF from one in which terms are presented as sections.
This intermediate step makes it easier to define the type formers for $\constructedcwf{\CC}{\U}$
from those of $\U$.
We do not explicate this translation here; for this we 
point the reader to our formalisation (see \coqlink{https://anonymous.4open.science/r/fossacs-strict-hott/src/CategoriesWithFamilies/CatsWithFamsTypeFormersWithSections.v}).

\begin{theorem}[\coqlink{https://anonymous.4open.science/r/fossacs-strict-hott/src/ComprehensionCats/CwfFromCompCatWithUnivUnitDemocracy.v#L95}]
   $\constructedcwf{\CC}{\U}$ has unit types.
    \label{thm:universe_unit}
\end{theorem}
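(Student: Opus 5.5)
The plan is to take as unit type in a CwF-context $\Gamma : \tmincompcat{\emptyctx}{\U}$ the code $\cwffont{\UnitCwf{\Gamma}} \coloneq \UnitUniverse{\emptyctx.\el(\Gamma)} : \tmincompcat{\emptyctx.\el(\Gamma)}{\U_{\emptyctx.\el(\Gamma)}}$, induced as in \cref{def:comp_cat_univ_closed_unit}. Since a term of this CwF-type is a section in $\tmincompcat{\emptyctx.\el(\Gamma)}{\el(\UnitUniverse{\emptyctx.\el(\Gamma)})}$, the term $\cwffont{\ttCwf{\Gamma}}$ will come from \cref{prop:unit-subst}. That proposition says $\el(\UnitUniverse{})$ has a unique term, and it also gives stability of unit codes under substitution. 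I would first transport it along the isomorphism $\el(\UnitUniverse{\emptyctx.\el(\Gamma)}) \iso \el(\UnitUniverse{})[\iota] \iso \UnitCompCat{\emptyctx}[\iota] \iso \UnitCompCat{\emptyctx.\el(\Gamma)}$. This uses $\Univiso{\iota}{\UnitUniverse{}}$, the isomorphism $i_{\UnitUniverse{}}$, and $i_{\mathbf{1},\iota}$ from \cref{def:comp_cat_unit}.

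I would take $\cwffont{\ttCwf{\Gamma}}$ to be $\coerce{\ttCompCat{\emptyctx.\el(\Gamma)}}{j}$, where $j$ is the inverse of that composite isomorphism. Uniqueness then goes as follows. Given any CwF-term $u$, coercing along the isomorphism gives a term of $\UnitCompCat{\emptyctx.\el(\Gamma)}$, which equals $\ttCompCat{}$ by \cref{def:comp_cat_unit}. Coercing back, which is functorial because $\coerce{(\coerce{t}{a})}{a'} = \coerce{t}{a' \circ a}$ by functoriality of $\chi$, recovers $u = \cwffont{\ttCwf{\Gamma}}$. Since CwF-terms form a set, uniqueness is a proposition, so no further coherence is needed here.

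The remaining condition is strict stability: $\cwfsubst{\cwffont{\UnitCwf{\Gamma}}}{s} = \cwffont{\UnitCwf{\Delta}}$ for $s : \emptyctx.\el(\Delta) \to \emptyctx.\el(\Gamma)$, that is, $\coerce{\UnitUniverse{\emptyctx.\el(\Gamma)}[s]}{\Uiso{s}} = \UnitUniverse{\emptyctx.\el(\Delta)}$. This is an equation between sections, hence between morphisms of $\CC$, so it lives in a set. I expect it to be exactly the stability half of \cref{prop:unit-subst}.

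If I instead had to unfold it, I would write $\UnitUniverse{\Gamma'}$ as $\coerce{\UnitUniverse{}[\iota_{\Gamma'}]}{\Uiso{\iota_{\Gamma'}}}$. I would then use that term substitution is strictly functorial, $t[\iota_\Gamma][s] = t[\iota_\Gamma \circ s]$ up to the canonical comparison of reindexings, together with $\iota_\Gamma \circ s = \iota_\Delta$. The main obstacle is the bookkeeping of the coercion isomorphisms: checking that the composite $\Uiso{s} \circ \Uiso{\iota_\Gamma}[s]$ agrees with $\Uiso{\iota_\Delta}$ modulo the reindexing comparison isomorphism. I would try to get this from the universal property of the terminal object, using the coherence of $\U$ from \cref{sec:coherences}, since both sides are vertical isomorphisms between reindexings of $\U$ determined by maps into $\emptyctx$.
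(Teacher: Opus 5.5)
Your proposal is correct and follows essentially the paper's route: the unit type is the induced code $\UnitUniverse{\emptyctx.\el(\Gamma)}$, the term is $\ttCompCat{\emptyctx.\el(\Gamma)}$ coerced along the inverse of the induced isomorphism $\el(\UnitUniverse{\emptyctx.\el(\Gamma)}) \iso \UnitCompCat{\emptyctx.\el(\Gamma)}$, uniqueness is inherited from uniqueness of $\ttCompCat{}$, and strict stability holds because the code is reindexed from the closed code $\UnitUniverse{}$; your explicit composite isomorphism just spells out what the paper writes as $i^{-1}_{\UnitUniverse{\emptyctx.\el(\Gamma)}}$. One small correction to your fallback for stability: the two canonical vertical isomorphisms into $\U_{\emptyctx.\el(\Delta)}$ are equal because factorisations through cartesian lifts are unique (coherence of the cleavage), not because of the appendix coherence conditions, which concern $\Univiso{s}{t}$ rather than $\Uiso{s}$.
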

\begin{proofsketch}[Proof in \cref{sec:proofs}]
    Let $\Gamma$ be a context in $\constructedcwf{\CC}{\U}$, i.e. $\Gamma: \tmincompcat{\emptyctx}{\U}$.
    The unit type $\cwffont{\UnitCwf{\Gamma}}$ in $\constructedcwf{\CC}{\U}$ is given by the unit code 
    $\UnitUniverse{\emptyctx.\el(\Gamma)} : \tmincompcat{\emptyctx.el(\Gamma)}{\U_{\emptyctx.el(\Gamma)}}$ in the original model.
    The unique unit term  $\cwffont{\ttCwf{\Gamma}}$ in $\constructedcwf{\CC}{\U}$ is given by the term $\ttCompCat{\emptyctx.\el(\Gamma)}$ coerced 
    along
    $i^{-1}_{\UnitUniverse{\emptyctx.\el(\Gamma)}}.$
\end{proofsketch}

\begin{theorem}[\coqlink{https://anonymous.4open.science/r/fossacs-strict-hott/src/ComprehensionCats/CwfFromCompCatWithUnivSigma.v#L1529}]
   $\constructedcwf{\CC}{\U}$ has $\Sigma$-types.
    \label{thm:universe_sigma}
\end{theorem}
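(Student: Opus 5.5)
We define the $\Sigma$-type structure on $\constructedcwf{\CC}{\U}$ using the $\Sigma$-codes from \cref{def:comp_cat_univ_closed_sigma}. Let $\Gamma : \tmincompcat{\emptyctx}{\U}$, let $A$ be a CwF-type over $\Gamma$, and let $B$ be a CwF-type over $\cwfextention{\Gamma}{A} = \SigmaUniverse{\Gamma}{A}$. Then $B$ is a term of $\U$ in context $\emptyctx.\el(\SigmaUniverse{\Gamma}{A})$, whereas the code $\SigmaUniverse{A}{-}$ expects a term in context $\emptyctx.\el(\Gamma).\el(A)$. We therefore first transport $B$ along the inverse of $\overline{\mathsf{proj}}$, which is the isomorphism $\emptyctx.\el(\Gamma).\el(A) \iso \emptyctx.\el(\SigmaUniverse{\Gamma}{A})$ built from $\pairCompCat{\el(\Gamma)}{\el(A)}$ and $\emptyctx.i^{-1}_{\SigmaUniverse{\Gamma}{A}}$. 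This gives $B' \coloneq \coerce{B[\overline{\mathsf{proj}}^{-1}]}{\Uiso{}}$, and we set $\cwffont{\SigmaCwf{A}{B}} \coloneq \SigmaUniverse{A}{B'}$.

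Rather than defining $\mathsf{pair}$, $\mathsf{proj}^1$ and $\mathsf{proj}^2$ directly on CwF-terms, we use the intermediate section-based presentation of CwF type formers from the formalisation mentioned before \cref{thm:universe_unit}. In that presentation a $\Sigma$-structure amounts to an isomorphism of context extensions $\Gamma.A.B \iso \Gamma.\Sigma(A,B)$ over $\Gamma$, stable under substitution. In $\constructedcwf{\CC}{\U}$ the context $\Gamma.A.B$ is $\SigmaUniverse{\SigmaUniverse{\Gamma}{A}}{B}$ and $\Gamma.\Sigma(A,B)$ is $\SigmaUniverse{\Gamma}{\SigmaUniverse{A}{B'}}$. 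After applying $\el$, the isomorphisms $i_{\SigmaUniverse{-}{-}}$ identify both with iterated comprehension-category extensions $\emptyctx.\el(\Gamma).\el(A).\el(B')$. The required isomorphism is then the composite of
\begin{itemize}
\item these $i$'s and their inverses,
\item the $\mathsf{pair}$/$\mathsf{proj}$ isomorphisms of \cref{def:sigma-non-full},
\item the comparison induced by $\overline{\mathsf{proj}}$ between $B$ and $B'$.
\end{itemize}
Commutation with the projections to $\Gamma$ holds because each component commutes with the $\pi$'s, by the equation $\pi_{\Sigma}\circ\mathsf{pair}=\pi_A\circ\pi_B$. Being an isomorphism is immediate, so the bijection between pairs and terms of $\Sigma$ in the CwF follows.

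The main obstacle is strict stability under substitution. We need $\cwfsubst{\cwffont{\SigmaCwf{A}{B}}}{s} = \cwffont{\SigmaCwf{\cwfsubst{A}{s}}{\cwfsubst{B}{s^+}}}$ as an equality of terms, and the pairing must commute with substitution. For types, we plan to unfold $\cwfsubst{-}{s}$ as $\coerce{(-)[s]}{\Uiso{s}}$ and then apply the substitution equation of \cref{def:comp_cat_univ_closed_sigma}. This reduces the goal to an equality $\coerce{B'[s.\el(A)\circ \Delta.i^{-1}_{s,A}]}{\Uiso{}} = (\cwfsubst{B}{s^+})'$. Both sides are terms of $\U$, hence morphisms in $\CC$, so the equality can be checked by the pullback universal property from \cref{item:pullback}. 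That check in turn reduces to an equation between composites of $\overline{\mathsf{proj}}$, the weakened substitution $s^+ = \langle s\circ p, q\rangle$ from \cref{thm:construction}, and $s.\el(A)$. We expect to discharge this equation using the coherence of $i_{\SigmaUniverse{}{}}$ under reindexing (\cref{sec:coherences}) together with the pseudo-stability of $\mathsf{proj}$ in \cref{def:sigma-non-full}. Stability of the pairing isomorphism is proved in the same way; it is an equation of morphisms in $\CC$, so no higher coherence arises. We will first try to isolate a general lemma stating that $\overline{\mathsf{proj}}$ is natural in $s$ up to the canonical isos, since both stability proofs reduce to it.
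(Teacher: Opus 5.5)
Your proposal is correct and follows essentially the same route as the paper. The paper also defines $\overline{\Sigma(A,B)}$ as the code $\Sigma^{\U}(A,B')$ with $B$ reindexed along the inverse of $\overline{\mathsf{proj}}$, builds $\overline{\mathsf{pair}}$ and its inverse from the isomorphisms $i_{\Sigma^{\U}}$ and the comprehension-category $\Sigma$-isomorphisms (via the section-based presentation), and proves strict stability of the type and of $\overline{\mathsf{pair}}$ from the substitution equation and coherence of $\Sigma$-codes. One refinement: the type-level equation you reduce to is just the characterisation $\overline{\mathsf{proj}}\circ s^+ = s.\el(A)\circ \Delta.i^{-1}_{s,A}\circ\overline{\mathsf{proj}}$ of the weakened substitution, which follows from the definitions of $q$ and $\langle -,-\rangle$ together with the coherence of $\U$ in \cref{def:universe-coherence}, whereas the coherence of $i_{\Sigma^{\U}}$ is what drives the stability of $\overline{\mathsf{pair}}$.
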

\begin{proofsketch}[Proof in \cref{sec:proofs}]
  Let ${\Gamma}$ be a context in $\constructedcwf{\CC}{\U}$.
  Suppose that we have types ${A}$ in context ${\Gamma}$ 
  and ${B}$ in context
    ${\cwfextention{{\Gamma}}{{A}}}$ in $\constructedcwf{\CC}{\U}$.
    The type $\cwffont{\SigmaCwf{A}{B}}$ in $\constructedcwf{\CC}{\U}$ is given by the following $\Sigma$-code in $\U$:
    \[
    \cwffont{\SigmaCwf{A}{B}} \coloneqq \SigmaUniverse{{A}}{
    \cwfsubst{{B}}{\pair{\el({{\Gamma}})}{\el({A})} \circ i_{\SigmaUniverse{{\Gamma}}{{A}}}}}.
    \]
    Pairing and projection are defined using \cref{def:comp_cat_univ_closed_sigma},
    and preservation under substitution is proved using standard lemmas about comprehension
    categories and the coherence of $\Sigma$-codes given in \cref{def:sigma-coherence} in \cref{sec:coherences}.
\end{proofsketch} 

\begin{theorem}[\coqlink{https://anonymous.4open.science/r/fossacs-strict-hott/src/ComprehensionCats/CwfFromCompCatWithUnivPi.v#L269}]
    \label{thm:universe_pi}
    If $\U$ is closed under $\Pi$-types, $\constructedcwf{\CC}{\U}$ has $\Pi$-types.
\end{theorem}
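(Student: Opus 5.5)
Following the pattern of \cref{thm:universe_sigma}, I would define the $\Pi$-type of $\constructedcwf{\CC}{\U}$ from the $\Pi$-code of the universe, after moving $B$ from the CwF-extended context to the comprehension-category extension. Let $\Gamma : \tmincompcat{\emptyctx}{\U}$, let $A : \tmincompcat{\emptyctx.\el(\Gamma)}{\U_{\emptyctx.\el(\Gamma)}}$, and let $B$ be a CwF-type in $\cwfextention{\Gamma}{A}$, that is, a term of $\U$ over $\emptyctx.\el(\SigmaUniverse{\Gamma}{A})$. Write $\rho = \pair{\el(\Gamma)}{\el(A)} \circ i_{\SigmaUniverse{\Gamma}{A}}$, suitably composed with $\emptyctx.(-)$ and inverses. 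I would set
\[
\cwffont{\PiCwf{A}{B}} \coloneqq \PiUniverse{A}{\cwfsubst{B}{\rho}}.
\]
Here $\cwfsubst{B}{\rho}$ is a term of $\U_{\emptyctx.\el(\Gamma).\el(A)}$, as \cref{def:comp_cat_univ_closed_pi} requires. As in \cref{thm:universe_sigma}, I would first use the formalised bridge from the paper, in which a CwF type-former structure may be given with terms presented as sections. This lets me work with sections of $\pi$ throughout rather than with primitive CwF-terms.

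For $\lambda$ and $\app$, a CwF-term of $B$ in $\cwfextention{\Gamma}{A}$ is a section of $\pi_{\el(B)}$ over $\emptyctx.\el(\SigmaUniverse{\Gamma}{A})$. Transporting along the isomorphism $\overline{\mathsf{proj}}$ and the coherence isomorphisms $i^{\el}$ turns it into a section $b$ of $\pi_{\el(\cwfsubst{B}{\rho})}$ over $\emptyctx.\el(\Gamma).\el(A)$. I would then define
\[
\cwffont{\lambdaCwf{A}{B}}(b) \coloneqq \coerce{\lambdaCompCat{\el(A)}{\el(\cwfsubst{B}{\rho})}(b)}{i^{-1}_{\PiUniverse{A}{\cwfsubst{B}{\rho}}}}.
\]
Application is defined dually: coerce along $i_{\PiUniverse{}{}}$, apply $\appCompCat{}{}$, and reindex along the section $\langle \id{}, a\rangle$. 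The bijection between $\lambda$ and $\app$ then follows from the $\beta\eta$ bijection in \cref{def:pi-non-full}, because every added step is composition with an isomorphism. Concretely, the coercions $\coerce{-}{i}$ along inverse isomorphisms cancel, and reindexing along the isomorphism $\overline{\mathsf{proj}}$ is invertible.

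The remaining work, and the main obstacle, is strict stability under substitution. For an arbitrary CwF-morphism $s$ we need the equation
\[
\cwfsubst{\cwffont{\PiCwf{A}{B}}}{s} = \cwffont{\PiCwf{\cwfsubst{A}{s}}{\cwfsubst{B}{\langle s\circ p, q\rangle}}}.
\]
I would prove it in three steps:
\begin{enumerate}
\item Rewrite the left-hand side using the $\Pi$-code equation of \cref{def:comp_cat_univ_closed_pi}.
\item Reduce to showing that $\cwfsubst{B}{\rho}$ reindexed along $s.\el(A)\circ\Delta.i^{-1}$ equals $\cwfsubst{(\cwfsubst{B}{\langle s\circ p,q\rangle})}{\rho'}$.
\item Prove that equation by strict functoriality of term substitution in the comprehension category together with an equality of context morphisms into $\emptyctx.\el(\SigmaUniverse{\Gamma}{A})$. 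That equality should follow from the uniqueness part of the pullback universal property (\cref{item:pullback} of \cref{not:comp_cats}), combined with the coherence of $\Sigma$-codes (\cref{def:sigma-coherence}) and of $\U$ (\cref{def:universe-coherence}).
\end{enumerate}
Stability of $\lambda$ is then an equation between sections. It holds because $\lambdaCompCat{}{}$ is pseudo-stable and $i_{\PiUniverse{}{}}$ is coherent under reindexing (\cref{sec:coherences}). Since terms form a set, no higher coherence is needed, and stability of $\app$ follows from stability of $\lambda$ through the bijection. I expect most of the effort to go into bookkeeping the coercion isomorphisms in step 3, so I would isolate general lemmas stating how $\coerce{-}{-}$ interacts with $(-)[s]$ and with composition before attacking the $\Pi$-specific goals.
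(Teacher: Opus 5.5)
Your proposal is correct and follows essentially the same route as the paper. It uses the same $\Pi$-code definition $\PiUniverse{A}{\cwfsubst{B}{\rho}}$, and obtains $\lambda$ and $\app$ from the comprehension-category operations by reindexing along the $\mathsf{pair}$/$\mathsf{proj}$ and $\Sigma$-code isomorphisms and coercing along $i^{\el}$ and $i_{\Pi^\U}$. Stability then comes from the $\Pi$-code equation together with the coherence diagram of \cref{def:pi-coherence}, again as in the paper.

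One small slip: the paper's $\app$ is the unary inverse of $\lambda$, a term over $\cwfextention{\Gamma}{A}$ obtained by reindexing along $\projCompCat{\el(\Gamma)}{\el(A)}$ composed with the $\Sigma$-code isomorphism. It is not a substitution along $\langle \id{}, a\rangle$, and your ``every added step is an isomorphism'' argument for the bijection literally applies only to this unary form.
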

\begin{proofsketch}[Proof in \cref{sec:proofs}]
     The type $\cwffont{\PiCwf{A}{B}}$ in $\constructedcwf{\CC}{\U}$ is 
     given by the following $\Pi$-code in the original model
     \[
     \cwffont{\PiCwf{A}{B}} \coloneqq \PiUniverse{{A}}{
     \cwfsubst{{B}}{\pair{\el({{\Gamma}})}{\el({A})} \circ i_{\SigmaUniverse{{\Gamma}}{{A}}}}}.
     \]
     for a suitable ${\Gamma}$, ${A}$ and ${B}$.
     Application and $\lambda$ are defined using \cref{def:comp_cat_univ_closed_pi},
     and preservation under substitution is proved using standard lemmas about comprehension
     categories and the coherence of $\Pi$-codes given in \cref{def:sigma-coherence} in \cref{sec:coherences}.
\end{proofsketch}

\section{Examples}
\label{sec:examples}

Finally,
we instantiate \cref{construction} to obtain strict models of type theory within HoTT.
We do so by equipping various models of type theory with a universe;
specifically, in the categories of sets, assemblies, and presheaves.
We see that our construction generalizes established 
strict models 
in the literature~\cite{altenkirch:2016,Gratzer_Gylterud_Mörtberg_Stenholm_2024},
and also give new examples,
namely assembly and presheaf models.

\subsection{Universes in Sets}
\label{sec:example_set}
To construct a universe in the category of sets,
we need to give a set $\U$ together with a map that assigns to terms $x : \U$ sets $\el(x)$.
We can deduce closure conditions of the resulting universe,
which we denote by $\Uh$,
from closure conditions on $\U$.
For instance,
if we have $u : \U$ such that $\el(u)$ is isomorphic to the unit type,
then $\Uh$ contains the unit type.
In addition,
$\Uh$ is closed under $\Sigma$-types
if for all $a : u$ and $b : \el(a) \rightarrow \U$,
we have $\sigma_{a, b} : \U$ such that $\el(\sigma_{a, b})$ and $\Sigma (x : \el(a)), \el(b(x))$ are isomorphic.
In our formalisation,
we use this general notion of universe (\coqlink{https://anonymous.4open.science/r/fossacs-strict-hott/src/SetModel/SetUniverses.v#L34}),
and we show that it gives rise to a universe in the comprehension category of sets
that is closed under the same type formers as $\U$.
This means that we can apply \Cref{construction} to this general notion of universe (\coqlink{https://anonymous.4open.science/r/fossacs-strict-hott/src/SetModel/SetUniverseTypes.v#L610})
to obtain strict models.
In the remainder,
we look at this construction for two concrete instantiations:
iterative sets~\cite{Gratzer_Gylterud_Mörtberg_Stenholm_2024} and induction-recursion~\cite{dybjer:2000}.

Iterative sets are based on the work by Aczel on relating constructive set theory and type theory~\cite{aczel:1978},
and later they were used by Gratzer, Gylterud, M\"ortberg, and Stenholm to construct the set model of type theory as a category with families~\cite{Gratzer_Gylterud_Mörtberg_Stenholm_2024}.
We do not recall the definition here, but we recall the main properties that we use.
Specifically,
we have a type $\itSet$ of iterative sets,
which is a set itself.
In addition, every $X : \itSet$ gives rise to a set $\el(X)$,
which allows us to conclude that $\itSet$ is a Tarski-style universe of sets.
Iterative sets are closed under $\Sigma$-types and $\Pi$-types,
and the unit type is an iterative set.
Hence, we get a universe $\itSet$,
which allows us to apply \Cref{construction}.

\begin{example}
\label{exa:iterative}
We have a CwF $\constructedcwf{\CC}{\itSet}$ of iterative sets.
Contexts are maps from the unit type to $\itSet$,
and hence, the same as iterative sets.
Given an iterative set $\Gamma$,
types in context $\Gamma$ are the same as maps $A : \el(\Gamma) \rightarrow \itSet$.
Given an iterative set $\Gamma$
and a type $A : \el(\Gamma) \rightarrow \itSet$,
terms of type $A$ 
are maps that assign to $x : \el(\Gamma)$ an element $\el(A(x))$.
This CwF is equivalent to the one described by Gratzer, Gylterud, M\"ortberg, and Stenholm~\cite{Gratzer_Gylterud_Mörtberg_Stenholm_2024}.
\end{example}

Another way to obtain a universe of sets is via induction-recursion~\cite{dybjer:2000,dybjer:1999},
which Altenkirch and Kaposi use to define the standard model~\cite{altenkirch:2016}.
We do not recall the definition of universes as inductive-recursive types~\cite{dybjer:2000,dybjer:1999,palmgren:1998}.

\begin{example}
\label{exa:ir-univ}
Every inductive-recursive universe $\IRSet$ induces a CwF $\constructedcwf{\CC}{\IRSet}$.
Contexts in $\constructedcwf{\CC}{\IRSet}$ are members of $\IRSet$,
types in context $\Gamma$ are functions $\el(\Gamma) \rightarrow \IRSet$,
and terms of type $A : \el(\Gamma) \rightarrow \IRSet$ in context $\Gamma$
are functions that assign to $x : \el(\Gamma)$ an inhabitant of $\el(A(x))$.
This model is equivalent to the one given by Altenkirch and Kaposi~\cite{altenkirch:2016}.
\end{example}

\subsection{Universes for Assemblies}
\label{sec:example_ass}
The next example comes from realizability models of type theory~\cite{luo:1990,reus:1999a,streicher:1991}.
These models come with an impredicative universe of propositions
and an impredicative universe of types.
Grothendieck universes in the metatheory can be used to construct predicative universes~\cite{luo:1990}.

To obtain a universe in the realizability model,
we use Luo's construction~\cite{luo:1990},
who constructs a universe $\UAsm$ of assemblies from a universe $\U$ of sets.
The carrier of $\UAsm$ is the type of all assemblies whose carrier lies in $\U$.
For instance,
the universe $\itSet$ of iterative sets gives rise to a universe $\LiftAsm{\itSet}$ of assemblies,
and we also obtain a universe $\LiftAsm{\IRSet}$
from any inductive-recursive universe $\IRSet$.
We can apply \Cref{construction} to both $\LiftAsm{\itSet}$ and $\LiftAsm{\IRSet}$.

\begin{example}
\label{exa:assembly}
\Cref{construction} gives us a CwF $\constructedcwf{\CC}{\LiftAsm{\itSet}}$.
Contexts in this CwF are equivalent to assemblies whose carrier is an iterative set,
and types in context $\Gamma$ are the same as families $A : \el(\Gamma) \rightarrow \Asm(A)$ of assemblies for $A$
such that each $A(x)$ is an iterative set.
Finally, given an assembly $\Gamma$
and a family $A$ of assemblies over $\Gamma$,
terms of type $A$ in context $\Gamma$
are maps that assign to $x : \el(\Gamma)$ an element $t(x) : \el(A(x))$.
The CwF $\constructedcwf{\CC}{\LiftAsm{\IRSet}}$ is similar.
\end{example}

\begin{remark}
Given a universe $\U$ of sets that is closed under $\Sigma$-types,
we want to show that its lift to assemblies is closed under $\Sigma$-types of assemblies as well.
However,
this requires showing that the isomorphism between $\el(\sigma_{a, b})$ and $\Sigma (x : \el(a)), \el(b(x))$ is tracked by 
an element of the PCA.
While we need to assume this for general universes,
it holds for the universes of iterative sets and inductive-recursive universes.
This is because in both cases, the given isomorphism is the identity function,
which can always be tracked.
The same holds for other type formers, like $\Pi$-types.
In the formalisation,
we only treat general universes, assuming that the resulting universe of assemblies is closed under $\Sigma$-types (\coqlink{https://anonymous.4open.science/r/fossacs-strict-hott/src/Realizability/RealizabilityCompCat.v#L1282}).
\end{remark}

\begin{remark}
Since assembly models come with a variety of universes,
\Cref{construction} can be applied to them in different ways.
In \Cref{exa:assembly}, we chose predicative universes,
but
one could use the impredicative universe $\PER$ of partial equivalence relations instead.
Applying \Cref{construction} to $\PER$, does not give a CwF with an impredicative universe of types,
as $\PER$ is not itself a $\PER$.
\end{remark}

\subsection{Universes for Presheaves}
\label{sec:example_presh}
Finally, we instantiate \Cref{construction} to presheaf models of type theory.
From a universe $\U$ of sets,
one can construct a universe of presheaves~\cite{awodey:2024,hofmann:1997b,slattery:2025}.
Specifically,
we define a presheaf $\UPsh$
which maps objects $c : \CC$
to the type of functors $\opC{(\slice{\CC}{c})} \rightarrow \U$.
We indeed have a set of such functors,
because $\U$ is a set.
To construct $\el$,
we note that from a presheaf $\Gamma : \opC{\CC} \rightarrow \hset$ 
and a natural transformation $\tau : \Gamma \Rightarrow \UPsh$
we get a functor $\el(\tau) : \opC{(\catOfEl{\Gamma})} \rightarrow \hset$,
which maps $x : \CC$ and $\xx$ to $\el(\tau(x, \xx)(\id{x}))$.
Both inductive-recursive universes and the universe of iterative sets induce a universe in presheaves.

If $\U$ contains the unit type and is closed under $\Sigma$-types,
then so is $\UPsh$.
We can thus use \Cref{construction} to obtain a CwF of presheaves.
However,
to show that $\UPsh$ is closed under $\Pi$-types,
we need to assume that $\U$ is closed under $\Pi$-types
and that the types of objects and of morphisms in $\CC$ lie in $\U$.

\begin{example}
\label{exa:psh}
By applying \Cref{construction} to the universe of the presheaf model,
we get a CwF $\constructedcwf{\CC}{\UPsh}$ such that
contexts are presheaves $\overline{\Gamma} : \opC{\CC} \rightarrow \U$
and types in context $\Gamma$ are presheaves $\opC{\catOfEl{\overline{\Gamma}}} \rightarrow \U$.
Terms and substitutions are the same as in the presheaf model.
\end{example}

\section{Related Work}
\label{sec:relwork}
Strictification constructions are prevalent in the literature on the semantics of type theory,
and the most commonly used are based on 
the right adjoint \cite{hofmann:1994,streicher18} and the left adjoint splitting constructions \cite{10.1145/2754931,Giraud1965}.
However,
as we argued in \Cref{sec:setting},
neither is suitable in HoTT
if one studies categories whose objects do not form a set.
We therefore considered a different construction.

Our construction is similar to Voevodsky's construction of C-systems from a universe category~\cite{voevodsky2015csystemdefineduniversecategory},
whose key idea likewise is to restrict a model to the types in a universe.
Compared to Voevodsky, we use a different starting point.
While we start with a comprehension category equipped with a universe,
the starting point of Voevodsky is a category $\CC$ with a terminal object and a universe structure.
Here, a universe structure consists of a morphism $p : \widetilde{U} \rightarrow U$
and a map that assigns to every $f : X \rightarrow U$ a pullback along $p$.
If we assume that $\CC$ has finite limits,
then one can directly construct a comprehension category with a universe from it.
However,
since Voevodsky does not assume that $\CC$ has all pullbacks,
he constructs a strict model of type theory as a C-system,
where types are defined in a slightly different way compared to us.
While we construct the collection of types in a single step and show that the resulting model is democratic~(\cref{rem:democracy_remark}),
Voevodsky uses an inductive construction to prove contextuality.
Voevodsky shows that the resulting C-system is closed under $\Pi$-types~\cite{voevodsky:2017a}.
We likewise show that the construction preserves $\Pi$-types, and further show that the resulting model is closed under $\Sigma$-types and unit types.

Larrea \cite{larrea:2018} gives an alternative presentation of Voevodsky's construction, more similar to ours, but 
with an important conceptual difference.
To equip the resulting model with $\Pi$- and $\Sigma$-types,
Larrea presents different closure conditions than we do.
We express closure conditions using the language of the comprehension category.
For instance, for closure under $\Sigma$-types,
we say that we have a term $\el(\SigmaUniverse{a}{b})$
and an isomorphism $\el(\SigmaUniverse{a}{b}) \iso \SigmaCompCat{\el(a)}{\el(b)}$.
Larrea expresses closure conditions with the internal language of the category of contexts,
and thus requires this category to be locally Cartesian closed.

Hua and Xu~\cite{hua2026polynomialfunctorspiclanssemantics} use a similar construction but for other purposes.
Just like Voevodsky, they obtain models that are not necessarily democratic,
whereas our construction has been modified to obtain democratic CwFs.
There are also foundational differences between our work and that of Hua and Xu.
Since we work in HoTT,
we must show that the collection of types in the constructed model forms a set.
Hua and Xu do not have to prove anything like that,
since they use Lean's type theory as foundation.
In addition,
they consider $\pi$-clans,
while we consider the more general notion of comprehension categories.
Compared to $\pi$-clans,
comprehension categories come with a notion of morphism between types,
which is used to interpret subtyping~\cite{coraglia_et_al:LIPIcs.TYPES.2023.3,10.1145/3776725}.
Since Hua and Xu use $\pi$-clans,
types in their models are represented by some property on the morphisms,
which states that there is some term in the universe that represents that type~\cite[Definition 5.5]{hua2026polynomialfunctorspiclanssemantics}.
In our model, types are interpreted more directly, as terms of the universe.

Our construction generalises the iterative sets model~\cite{Gratzer_Gylterud_Mörtberg_Stenholm_2024}
and the model based on inductive-recursive universes~\cite{altenkirch:2016},
as explained in \Cref{sec:examples}.
Extension of the iterative sets model to presheaf models have been considered by Grubm\"uller, Souche, and M\"ortberg~\cite{souche:2026,grubmuller:2026},
which is the model we considered in \Cref{sec:example_presh}.
An interesting aspect of their work lies in the used formalisation techniques.
They noted the necessity of using an abstract notion of universe to make a complete formalisation feasible.
Our work demonstrates that this technique  is widely applicable,
and we describe the general case.

An alternative approach to models of type theory in HoTT is based on the groupoid syntax of type theory described by Altenkirch, Kaposi, and Xie~\cite{altenkirch2025groupoidsyntaxtypetheoryset}.
They define the syntax of type theory as a 1-truncated higher inductive-inductive type rather than as a set-truncated type 
and prove that their syntax is a set.
Models of the 1-truncated syntax are known as groupoid categories with families (GCwF),
and one can show that sets form a GCwF.
Their approach presents an alternative way to obtain models of type theory in HoTT.
However, the cost is that one needs to identify suitable coherences and prove a coherence theorem.
Proving such a theorem can be challenging if there are, for instance, equations between types.
These problems do not occur when using the set-truncated syntax.
Damato used the notion of GCwF to present the container model~\cite{damato:2025a,vonglehn:2014}
and described how to get a strict version of the container model using induction-recursion.
We conjecture that the strict container model can be obtained as an instance of our construction.

\section{Conclusion}

In this paper, we carried out a universe model construction and showed that it is a general construction for strict models in HoTT.
More specifically, we started from the general notion of a comprehension category with a universe  and 
constructed a CwF by restricting to the types in the universe. 
We showed that this construction preserves several type formers, unit, $\Sigma$- and $\Pi$-types, and that
the resulting CwF is democratic. 
This construction serves as a general technique in HoTT, allowing us to obtain 
strict models such as set models, presheaf models and assembly models.
In particular, it generalises two existing set models of type theory, one based on a universe of 
iterative sets \cite{Gratzer_Gylterud_Mörtberg_Stenholm_2024} and  the other based on 
any inductive-recursive universe 
\cite{altenkirch:2016}. 
As for future work, we aim to study preservation of more type formers under \cref{construction},
e.g. identity types.

\subsubsection*{Acknowledgements}
The authors would like to thank Anders M\"ortberg for discussions regarding models of iterative sets
and Ambroise Lafont for useful discussions.
The authors would also like to thank Joseph Hua and Yiming Xu for discussions about the relation between our work.
This research was supported by the NWO project ``The Power of Equality'' OCENW.M20.380, which is financed by the Dutch Research Council (NWO).
This research was funded by the European Union (ERC, HOTT, 101170308).
Views and opinions expressed are however those of the author(s) only and do not
necessarily reflect those of the European Union or the European Research Council.
Neither the European Union nor the granting authority can be held responsible for them.
This work received government funding managed by the French National Research Agency under the France 2030 program, reference
“ANR-22-EXES-0013”.

\bibliographystyle{splncs04}
\bibliography{bibliography}

\appendix

\section{Full Definitions of Universe and Closure Properties}
\label{sec:coherences}

In this appendix, we give the complete definitions of a universe for comprehension categories
and the closure properties of the universe for $\Sigma$- and $\Pi$-types discussed 
in \cref{sec:universe_comp_cat},
including the coherences. 

Let us first define a useful notation.
\begin{notation}
    \label{not:elmap}
    In a comprehension category with a universe as defined in \cref{def:comp_cat_universe},
    let $t = t'$ be an equality of terms.
    We denote by 
    $i^{\mathrm{elmap}}$ the isomorphism $\el(t) \iso \el(t')$ given by path induction.
\end{notation}

\begin{definition}[{\cite[Definition~9.13]{vanderweide2025internallanguagesunivalentcategories}}]
    \label{def:universe-coherence}
Let $(\CC,\TT,p,\chi)$ be a comprehension category. A \emphasize{universe} for $(\CC,\TT,p,\chi)$ consists of:
\begin{enumerate}
\item An object $\U : \TT_{\emptyctx}$, which gives rise to $\U_\Gamma \coloneq \U[\iota_\Gamma]$ in each $\TT_\Gamma$ where $\iota_\Gamma : \Gamma \to \emptyctx$ is the terminal morphism;
\item For each $\Gamma : \CC$, a function $el : \tmincompcat{\Gamma}{\U_\Gamma} \to \TT_\Gamma$;
\item 
For each $s : \Gamma \to \Delta$ and $t : \tmincompcat{\Delta}{\U_\Delta}$, an isomorphism
$
\Univiso{s}{t} : (\el (t))[s] \iso \el (\coerce{t[s]}{\Uiso{s}})
$
stating that reindexing commutes with $\el$,
where $\Uiso{s} : \U_{\Delta}[s] \iso \U_\Gamma$ is the isomorphism 
$\U_{\emptyctx}[\iota_\Delta][s] \iso \U_{\emptyctx}[\iota_\Gamma]$
given by the universal property of the terminal object $\emptyctx$;
\end{enumerate}
such that
the following two diagrams commute:
\[
\begin{tikzcd}
  \el(t)
    \arrow[r, "i^{\mathrm{subst}}_\mathrm{id}"]
    \arrow[d, equals]
  & (\el(t))[\id{\Gamma}]
    \arrow[d, "\Univiso{\id{\Gamma}}{t}"] \\
  \el(t)
  & \el\big(\coerce{t[\id{\Gamma}]}{\Uiso{\id{\Gamma}}}\big)
    \arrow[l, "i^{\mathrm{elmap}}"']
\end{tikzcd}
\]
\[
\begin{tikzcd}
  (\el(t))[s_1][s_2]
    \arrow[r, "i^{\mathrm{subst}}_\mathrm{comp}"]
    \arrow[d, "{(\Univiso{s_1}{t})[s_2]}"']
  & (\el(t))[s_1 \circ s_2]
    \arrow[r, "\Univiso{s_1 \circ s_2}{t}"]
  & \el\big(\coerce{t[s_1 \circ s_2]}{\Uiso{s_1 \circ s_2}}\big)
    \arrow[d, "i^{\mathrm{elmap}}"] \\
  \big(\el(\coerce{t[s_1]}{\Uiso{s_1}})\big)[s_2]
    \arrow[rr, "\Univiso{s_2}{\coerce{t[s_1]}{\Uiso{s_1}}}"']
  &
  & \el\Big(\coerce{\big(\coerce{t[s_1]}{\Uiso{s_1}}\big)[s_2]}{\Uiso{s_2}}\Big)
\end{tikzcd}
\]
where $i^{\mathrm{subst}}_\mathrm{id}$ and $i^{\mathrm{subst}}_\mathrm{comp}$ are the usual coherence isomorphism for
reindexing along identity and composition in comprehension categories.
\end{definition}

In the next two definitions, we use the following two notations as shorthand:
\[
as \coloneqq \coerce{a[s]}{\Uiso{s}} \qquad bsa \coloneqq \coerce{b[s.\el(a) \circ \Gamma.i^{-1}_{s,a}]}
    {\Uiso{s.A}}
\]
for $s : \Delta \to \Gamma$ in $\CC$, $a : \tmincompcat{\Gamma}{\U_\Gamma}$ and $b : \tmincompcat{\Gamma.\el(a)}{\U_{\Gamma.\el(a)}}$.

\begin{definition}
 \label{def:sigma-coherence}
Let $(\CC,\TT,p,\chi)$ be a comprehension category supporting $\Sigma$-types.
A universe $(\U,\el)$ is said to be \emphasize{closed under $\Sigma$-types}
if for each $\Gamma : \CC$, $s : \Delta \to \Gamma$, $a : \tmincompcat{\Gamma}{\U_\Gamma}$ and $b : \tmincompcat{\Gamma.\el(a)}{\U_{\Gamma.\el(a)}}$,
we have:
\begin{enumerate}
\item a term $\SigmaUniverse{a}{b} : \tmincompcat{\Gamma}{\U_\Gamma}$;
\item an isomorphism 
$
i_{\SigmaUniverse{a}{b}} : \el(\SigmaUniverse{a}{b}) \iso \SigmaCompCat{\el(a)}{\el(b)},
$
where $\SigmaCompCat{\el(a)}{\el(b)}$ is the $\Sigma$-type in the comprehension category;
\end{enumerate}
such that 
\begin{enumerate}[resume]
    \item reindexing commutes with $\Sigma$-codes, i.e. for each $s : \Delta \to \Gamma$ in $\CC$, we have 
    \[ \coerce{\SigmaUniverse{a}{b}[s]}{\Uiso{s}} = \SigmaUniverse{as}{bsa},\]
  where $\Uiso{}$ is defined in \cref{notation:Utermpulledback}, and $as$ and $bsa$ are the notations defined above; 
    \item the isomorphism $i_{\SigmaUniverse{a}{b}}$ is coherent under reindexing, i.e. the following diagram commutes:
    \[
        \begin{tikzcd}
        \Gamma.\el(\SigmaUniverse{as}{bsa})
        \arrow[r, "i^{\Sigma}_{\Gamma,as,bsa}"]
        \arrow[d, "i^{\mathrm{elmap}}"']
        & \Gamma.\SigmaCompCat{\el(as)}{\el(bs)}
        \arrow[r, "i^{\mathrm{pair}}"]
        & \Gamma.\el(as).\el(bsa)
        \arrow[dd, "s.\el(a).\el(b)"] \\
        \Gamma.\el\big(\coerce{\SigmaUniverse{a}{b}[s]}{\Uiso{s}}\big)
        \arrow[d, "s.\el (\SigmaUniverse{a}{b})"']
        & &  \\
        \Delta.\el(\SigmaUniverse{a}{b})
        & \Delta.\SigmaCompCat{\el(A)}{\el(B)}
        \arrow[l, "(i_{\SigmaUniverse{a}{b}})^{-1}"]
        & \Delta.\el(a).\el(b)
        \arrow[l, "(i^{\mathrm{pair}})^{-1}"]
        \end{tikzcd}
    \]
    where $i^{\mathrm{pair}}$ is the pair-proj isomorphism of the $\Sigma$-types in a comprehension category
    defined in \cref{def:sigma-non-full}.
\end{enumerate}
\end{definition}

We define closure under $\Pi$-types analogously as follows.

\begin{definition}
 \label{def:pi-coherence}
Let $(\CC,\TT,p,\chi)$ be a comprehension category supporting $\Pi$-types.
A universe $(\U,\el)$ is said to be \emphasize{closed under $\Pi$-types}
if for each $\Gamma : \CC$, $a : \tmincompcat{\Gamma}{\U_\Gamma}$ and $b : \tmincompcat{\Gamma.\el(a)}{\U_{\Gamma.\el(a)}}$,
we have:
\begin{enumerate}
\item a term $\PiUniverse{a}{b} : \tmincompcat{\Gamma}{\U_\Gamma}$;
\item an isomorphism 
$
i_{\PiUniverse{a}{b}} : \el(\PiUniverse{a}{b}) \iso \PiCompCat{\el(a)}{\el(b)},
$
where $\PiCompCat{\el(a)}{\el(b)}$ is the $\Pi$-type in the comprehension category;
\end{enumerate}
such that 
\begin{enumerate}[resume]
    \item reindexing commutes with $\Pi$-codes, i.e. for each $s : \Delta \to \Gamma$ in $\CC$, we have 
    \[ \coerce{\PiUniverse{a}{b}[s]}{\Uiso{s}} = \PiUniverse{as}{bsa},\]
 where $\Uiso{}$ is defined in \cref{notation:Utermpulledback}, and $as$ and $bsa$ are the notations defined above;
    \item the isomorphism $i_{\PiUniverse{a}{b}}$ is coherent under reindexing, i.e. the following diagram commutes:
    \[
    \begin{tikzcd}
    \tmincompcat{\Delta.\el(a)}{\el(b)}
    \arrow[r, "\lambdaCompCat{a}{b}"]
    \arrow[d, "{\coerce{(-)[s.\el(a)]}{\Univiso{s.\el(a)}{b}}}"'] 
            & \tmincompcat{\Delta}{\PiCompCat{\el(a)}{\el(b)}}
    \arrow[d, "{\coerce{-}{(i_{\PiUniverse{a}{b}})^{-1}}}"] \\
    \tmincompcat{\Gamma.\el(as)}{\el(bsa)}
    \arrow[d, "\lambdaCompCat{\el(as)}{\el(bsa)}"']
            & \tmincompcat{\Delta}{\el(\PiUniverse{a}{b})}
    \arrow[d, "{(-)[s]}"] \\
    \tmincompcat{\Gamma}{\PiCompCat{\el(as)}{\el(bsa)}}
    \arrow[d, "{\coerce{-}{(i_{\PiUniverse{as}{bsa}} )^{-1}}}"']
            & \tmincompcat{\Gamma}{(\el(\PiUniverse{a}{b}))[s]}
    \arrow[d, "{\coerce{-}{\Univiso{s}{\PiUniverse{a}{b}}}}"] \\
    \tmincompcat{\Gamma}{\el(\PiUniverse{as}{bsa})}
            & \tmincompcat{\Gamma}{\el(\coerce{\PiUniverse{a}{b}[s]}{\Uiso{s}})}
    \arrow[l, "{\coerce{-}{i^{\mathrm{elmap}}}}"]
    \end{tikzcd}
    \]
    where $\lambdaCompCat{-}{-}$ is the usual lambda operation of comprehension categories defined  in \cref{def:pi-non-full}.
\end{enumerate}
\end{definition}

\section{Proofs of \Cref{sec:construction}}
\label{sec:proofs}

\begin{proof}[Proof of \Cref{thm:democracy}]
    For each context $\Gamma : \tmincompcat{\emptyctx}{\U}$, we have a canonical type $\bar{\Gamma}$ in the empty CwF-context 
    defined as follows in $\CC$:
    \[\begin{tikzcd}[ampersand replacement=\&,row sep=large]
        {\emptyctx.\el(\UnitUniverse{\emptyctx})} \& \emptyctx \\
        \& {\emptyctx.\el(\UnitUniverse{\emptyctx}).\U} \& {\emptyctx.\U} \\
        \& {\emptyctx.\el(\UnitUniverse{\emptyctx})} \& \emptyctx.
        \arrow["\pi_{\el(\UnitUniverse{\emptyctx})}", from=1-1, to=1-2]
        \arrow["{\bar{\Gamma}}", dashed, from=1-1, to=2-2]
        \arrow[curve={height=12pt}, equals, from=1-1, to=3-2]
        \arrow["\Gamma", from=1-2, to=2-3]
        \arrow[from=2-2, to=2-3]
        \arrow[from=2-2, to=3-2]
        \arrow["\lrcorner"{anchor=center, pos=0.125}, draw=none, from=2-2, to=3-3]
        \arrow[from=2-3, to=3-3]
        \arrow["\pi_{\el(\UnitUniverse{\emptyctx})}"', from=3-2, to=3-3]
    \end{tikzcd}\]
    We have to show $\Gamma \iso \cwfextention{\emptyctx}{\bar{\Gamma}}$, which corresponds to
    $\Gamma \iso \SigmaUniverse{\UnitUniverse{\emptyctx}}{\cwfsubst{\Gamma}{\pi_{\el(\UnitUniverse{\emptyctx})}}}$ in $\CC$.
    One can easily show that for this, it suffices to give an isomorphism of the form 
    $\emptyctx . \el (\Gamma) \iso \emptyctx . \el (\SigmaUniverse{\UnitUniverse{\emptyctx}}{\cwfsubst{\Gamma}{\pi_{\el(\UnitUniverse{\emptyctx})}}})$ in $\CC$, which follows from \cref{prop:unit-subst}.
\end{proof}

\begin{proof}[Proof of \Cref{thm:universe_unit}]
    Let $\Gamma$ be a context in $\constructedcwf{\CC}{\U}$, i.e. $\Gamma: \tmincompcat{\emptyctx}{\U}$.
    The unit type $\cwffont{\UnitCwf{\Gamma}}$ in $\constructedcwf{\CC}{\U}$ is given by the unit code 
    $\UnitUniverse{\emptyctx.\el(\Gamma)} : \tmincompcat{\emptyctx.el(\Gamma)}{\U_{\emptyctx.el(\Gamma)}}$ in the original model.
    The unique unit term  $\cwffont{\ttCwf{\Gamma}}$ in $\constructedcwf{\CC}{\U}$ is given by the term $\ttCompCat{\emptyctx.\el(\Gamma)}$ coerced 
    along
    $i^{-1}_{\UnitUniverse{\emptyctx.\el(\Gamma)}}$, where $\ttCompCat{}$ is the unique 
    term of unit in the comprehension category and $i^{-1}_{\UnitUniverse{\emptyctx.\el(\Gamma)}}$ is 
    the one defined in \cref{def:comp_cat_univ_closed_unit}.
    For the preservation under substitution, we have to show that 
    $\coerce{\cwffont{\UnitCwf{\Delta}} [s]}{\Uiso{s}} = \cwffont{\UnitCwf{\Gamma}}$
    for any $s : \emptyctx.\Gamma \to \emptyctx.\Delta$.
    This follows directly from $\cwffont{\UnitCwf{-}}$ being a term
    of type $\U_{\emptyctx.\el(-)}$.
\end{proof}

\begin{proof}[Proof of \Cref{thm:universe_sigma}]
  Let ${\Gamma}$ be a context in $\constructedcwf{\CC}{\U}$.
  Suppose that we have types ${A}$ in context ${\Gamma}$ 
  and ${B}$ in context
    ${\cwfextention{{\Gamma}}{{A}}}$ in $\constructedcwf{\CC}{\U}$.
    The type $\cwffont{\SigmaCwf{A}{B}}$ in $\constructedcwf{\CC}{\U}$ is given by the following $\Sigma$-code in the original model
    \[
    \cwffont{\SigmaCwf{A}{B}} \coloneqq \SigmaUniverse{{A}}{
    \cwfsubst{{B}}{\pair{\el({{\Gamma}})}{\el({A})} \circ i_{\SigmaUniverse{{\Gamma}}{{A}}}}}.
    \]
    The pair map $
    \cwffont{\pairCwf{A}{B}} : \cwfextention{\cwfextention{{\Gamma}}{{A}}}{{B}} \cwfmorphism \cwfextention{{\Gamma}}{\cwffont{\SigmaCwf{A}{B}}}.$
    is of the following form in the comprehension category:
    \[
    \emptyctx.\el(\SigmaUniverse{\SigmaUniverse{{\Gamma}}{{A}}}{{B}}) \to 
    \emptyctx.\el(\SigmaUniverse{{\Gamma}}{\SigmaUniverse{{A}}{\cwfsubst{{B}}{\pair{\el({{\Gamma}})}{\el({A})} \circ i_{\SigmaUniverse{{\Gamma}}{{A}}}}}}).
    \]
    This map is defined as a composition of instances of the isomorphism $i_{\SigmaUniverse{-}{-}}$, which pushes an $\el$ into the arguments 
    of $\SigmaUniverse{-}{-}$, and standard isomorphisms about $\Sigma$ in comprehension categories.
    The projection map is defined similarly, as a composite
    of different instances of the same isomorphisms, but in the opposite
    direction.
    It is then straightforward to prove the beta and eta rules. 

    Now we have to show that this $\Sigma$-type structure is preserved under substitution, i.e. for each $s : \emptyctx.\Delta \to \emptyctx.\Gamma$, we have 
    $
    \cwfsubst{\cwffont{\SigmaCwf{A}{B}}}{s} = \cwffont{\SigmaCwf{\cwfsubst{A}{s}}{\cwfsubst{B}{\cwffont{s.A}}}}
    $
    and 
    $
    {\cwffont{s.\cwffont{\SigmaCwf{A}{B}}}} \circ \cwffont{\coerce{\pairCwf{A[s]}{\Uiso{s}}}{\cwfsubst{B}{\cwfextention{s}{A}}}}  = 
    \cwffont{\pairCwf{A}{B}}
    \circ \cwffont{s.A.B},
    $
    where by $\cwffont{s.A}$ we mean the morphism 
    $\cwffont{\langle s \circ p_{A[s]} , q_{A[s]}\rangle} : 
    \cwfextention{\Delta} {A [s]} \to \cwfextention{\Gamma}{A}$
    in $\constructedcwf{\CC}{\U}$ given by the following composition in 
    $\CC$:
    \[\begin{tikzcd}[ampersand replacement=\&, 
        column sep = large]
	{\emptyctx.\el(\Delta).\el(\coerce{A [s]}{\Uiso{s}})} \& {\emptyctx.\el(\Delta).\el(A)[s]} \& { \emptyctx.\el(\Gamma).\el(A)}
	\arrow["{\emptyctx.\el(\Delta).\Univiso{s}{A}}", from=1-1, to=1-2]
	\arrow["{s.\el(A)}", from=1-2, to=1-3]
    \end{tikzcd}\]
    composed with the obvious instances of $\mathsf{pair}$, $\mathsf{proj}$ and 
    $i^\el$;
    and similarly for $\cwffont{(s.A.B)}$.
    The two goals are proved using standard lemmas about comprehension
    categories and the coherence of $\Sigma$-codes given in \cref{def:sigma-coherence} in \cref{sec:coherences}.
    For the complete proof we refer the reader to our formalisation (
        \coqlink{https://anonymous.4open.science/r/fossacs-strict-hott/src/ComprehensionCats/CwfFromCompCatWithUnivSigma.v#L863}
        and \coqlink{https://anonymous.4open.science/r/fossacs-strict-hott/src/ComprehensionCats/CwfFromCompCatWithUnivSigma.v#L1453}).
\end{proof}

\begin{proof}[Proof of \Cref{thm:universe_pi}]
    The type $\cwffont{\PiCwf{A}{B}}$ in $\constructedcwf{\CC}{\U}$ is defined analogously to the $\Sigma$-type, i.e. 
    given by the following $\Pi$-code in the original model
    \[
    \cwffont{\PiCwf{A}{B}} \coloneqq \PiUniverse{{A}}{
    \cwfsubst{{B}}{\pair{\el({{\Gamma}})}{\el({A})} \circ i_{\SigmaUniverse{{\Gamma}}{{A}}}}}.
    \]
    for a suitable ${\Gamma}$, ${A}$ and ${B}$.
    For a CwF-term ${t}$ of type ${B}$, ${\cwffont{\lambda_{A,B}} ({t})}$
    is a CwF-term of $\cwffont{\PiCwf{A}{B}}$ and is given by the following term of $\el({B})$ in $\CC$:
    \[ \coerce{
  \lambda (
    \coerce{ \coerce{t [\Gamma.i_{\Sigma^\U}^{-1}]
      }{i^{\el}}[\pairCompCat{\el({{\Gamma}})}{\el({{A}})}]}{i^{\el}})}{(i_{\Pi^\U})^{-1}}
    \]
    where we don't include the annotations of the isomorphisms and $\lambda$ for readability.
    Application in $\constructedcwf{\CC}{\U}$ for a CwF-term ${f}$ of $\cwffont{\PiCwf{A}{B}}$ is 
    defined similarly in terms of the application in $\CC$ as follows:
    \[
    \coerce{
    \coerce{
        \coerce{
        \coerce{
            \mathsf{app}(\coerce{f}{i_{\Pi^\U}})
        }{i^\U}
        }{(i^{\el})^{-1}}
        [\projCompCat{\el({{\Gamma}})}{\el({{A}})} \circ \Gamma.i_{\Sigma^\U}]}
        {i^{\mathsf{comp}}}}{{i^{\id{}}}}
    \]
    where $i^{\id{}}$ and $i^\mathsf{comp}$ are the usual coherence isomorphisms corresponding to substitution along identity
    and composition in a comprehension category. 
    For the complete definitions of $\lambda$ and $\mathsf{app}$ we refer the reader to our formalisation.
    It is straightforward to prove the beta and eta rules using standard lemmas about the isomorphisms involved. 

    For the preservation of $\cwffont{\PiCwf{A}{B}}$ under substitution, we have to show  
    $
    \cwfsubst{\cwffont{\PiCwf{A}{B}}}{s} = \cwffont{\PiCwf{\cwfsubst{A}{s}}{\cwfsubst{B}{\cwffont{s.A}}}}
    $ and 
    \[
    \cwfsubst{\cwffont{\lambda_{{A},{B}}} ({t})}{s} 
    = \cwffont{\lambda_{\cwfsubst{{A}}{s}, \cwfsubst{{B}}{\cwffont{s.A}}}} (\cwfsubst{{t}}{\cwffont{s.A}})
    \]
    for each $s : \emptyctx.\Delta \to \emptyctx.\Gamma$ and ${t} : \tmincompcat{}{\el ({B})}$,
    where $\cwffont{s.A}$ is defined the same as in \cref{thm:universe_sigma}.
    The goals are proved using standard lemmas about comprehension
    categories and the coherence of $\Pi$-codes given in \cref{def:pi-coherence} in \cref{sec:coherences}.
    For the complete proof we refer the reader to our formalisation (\coqlink{https://anonymous.4open.science/r/fossacs-strict-hott/src/ComprehensionCats/CwfFromCompCatWithUnivPi.v#L197} 
    and \coqlink{https://anonymous.4open.science/r/fossacs-strict-hott/src/ComprehensionCats/CwfFromCompCatWithUnivPi.v#L222}).
\end{proof}

\end{document}